
\documentclass{tlp}

\usepackage{amsmath}
\usepackage{graphicx}
\usepackage{multirow}

\usepackage{amsfonts}
\usepackage{amssymb}
\usepackage{amsmath}
\usepackage{latexsym}
\usepackage{stmaryrd}
\usepackage{microtype}
\usepackage{comment}
\usepackage{tikz-cd}
\usepackage{preamble}

\newtheorem{theorem}{Theorem}
\newtheorem{definition}{Definition}
\newtheorem{proposition}{Proposition}
\newtheorem{example}{Example}
\newtheorem{corollary}{Corollary}
\newtheorem{lemma}{Lemma}

\makeatletter
\newcommand\qed{\hspace*{1em}\hbox{\proofbox}\endtrivlist} 
\newenvironment{proofwithoutqed}
{\@ifnextchar[{\@oprf}{\@nprf}}
{}
\makeatother


\newcommand{\op}[1]{{#1}^{\mathit{op}}}

\newcommand{\catname}[1]{{\normalfont\textbf{#1}}}
\newcommand{\CLat}{\catname{CLat}}

\newcommand{\CPO}{\catname{CPO}}
\newcommand{\poset}{\catname{POSet}}
\newcommand{\BiLat}{\catname{BiLat}}

\newcommand{\Approx}{\catname{Approx}}	
\newcommand{\greeks}{\catname{LUcons}}	
\newcommand{\compjoinsl}{\catname{CJSLat}}	
	
\newcommand{\Obj}[1]{\normalfont\text{Ob}(\text{#1})}
\newcommand{\Hom}[1]{\normalfont\text{Mor}(\text{#1})}

\newcommand{\leqpt}{\leq_\mathit{pt}}

\newcommand{\leqpr}{\leq} 
\newcommand{\zeroproj}[1]{\mathfrak{p}^0_{#1}}
\newcommand{\famtypes}{\tau} 
\newcommand{\subscript}[2]{{#1}_{\!{#2}}}
\newcommand{\HOL}{\mathcal{HOL}} 

\DeclareMathOperator\lfp{lfp}



\newcommand*{\pnot}{\mathord{\sim}}

\newcommand{\lub}{\bigsqcup}

\newcommand{\glb}{\bigsqcap}

\newcommand\leqp{\leq_p}
\newcommand\lrule{\leftarrow}
\newcommand\ltrue{\mathbf{t}}
\newcommand\lfalse{\mathbf{f}}

\newcommand{\oldmwrst}[3]{\llbracket#1\rrbracket_{#3}(#2)}

\newcommand{\exact}[1]{\mathcal{E}_{#1}}

\usepackage{xspace}
\newcommand\m[1]{\ensuremath{#1}\xspace}
\newcommand\basedom{\m{\iota}}
\newcommand\bool{\m{o}}

\usepackage[disable]{review}

\usepackage[final]{listings}
\newcommand\newlineInListing{\mbox{\textcolor{red}{$\hookrightarrow$}\space}}

\newcommand\ie{i.e.,\xspace}

\lstdefinelanguage{holp}{
	morekeywords=[1]{X,Y,Z,M,T}, 
	morekeywords=[2]{P,Q,R,S}, 
	morekeywords=[3]{A,B,C}, 
	morecomment=[l]{\%}
}
\lstset{
	language=holp,
	tabsize=2,
	basicstyle=\ttfamily\lst@ifdisplaystyle\footnotesize\fi,
	frame=none,
	showstringspaces=false,
	breaklines = true,
	commentstyle=\sffamily\itshape\color{black!60!white},
	keywordstyle=[1]\color{black!60!blue},
	keywordstyle=[2]\color{black!60!red},
	keywordstyle=[3]\color{black!60!green},
	literate={~} {$\sim$}{1}{:-} {$\lrule$}{2}{=}{$\approx$}{3},
	numbers=left, stepnumber=1, numberstyle=\tiny, numbersep=10pt,
	breaklines=true,  xleftmargin=15pt,
	postbreak=\newlineInListing,escapechar=|,
}

\begin{document}
	
	\lefttitle{Cambridge Author}
	
	\jnlPage{1}{8}
	\jnlDoiYr{2021}
	\doival{10.1017/xxxxx}
	
	\title[]{A Category-Theoretic Perspective on Higher-Order Approximation Fixpoint Theory\thanks{This study was funded
			by Fonds Wetenschappelijk Onderzoek -- Vlaanderen (project G0B2221N, and grant V426524N), and by the European Union -- NextGenerationEU under the National Recovery and Resilience Plan ``Greece 2.0'' (H.F.R.I. ``Basic research Financing (Horizontal support of all Sciences)'', Project Number: 16116).}}
	
	\begin{authgrp}
		\author{\sn{Pollaci} \gn{Samuele}}
		\affiliation{Vrije Universiteit Brussel, Belgium \and
			Katholieke Universiteit Leuven, Belgium }
		\author{\sn{Kostopoulos} \gn{Babis}}
		\affiliation{Harokopio University of Athens, Greece}
			\author{\sn{Denecker} \gn{Marc}}
		\affiliation{Katholieke Universiteit Leuven, Belgium }
			\author{\sn{Bogaerts} \gn{Bart}}
		\affiliation{
			Katholieke Universiteit Leuven, Belgium \and Vrije Universiteit Brussel, Belgium 
			}
	\end{authgrp}
	
	\history{\sub{xx xx xxxx;} \rev{xx xx xxxx;} \acc{xx xx xxxx}}
	
	\maketitle
	
	\begin{abstract}
	Approximation Fixpoint Theory (AFT) is an algebraic framework designed to study the semantics of non-monotonic logics. Despite its success, AFT is not readily applicable to higher-order definitions. To solve such an issue, we devise a formal mathematical framework employing concepts drawn from Category Theory. In particular, we make use of the notion of  Cartesian closed category to inductively construct higher-order approximation spaces while preserving the structures necessary for the correct application of AFT. We show that this novel theoretical approach extends standard AFT to a higher-order environment, and generalizes the AFT setting of  \cite{CRS18ApproximationFixpointTheoryWell-FoundedSemanticsHigher-Order}. 
	
	Under consideration in Theory and Practice of Logic Programming (TPLP).
	\end{abstract}
	
	\begin{keywords}
		Approximation fixpoint theory, Higher-order definitions, Category theory.
	\end{keywords}

	\section{Introduction}\label{sec:Introduction}
	
	Approximation Fixpoint Theory (AFT) \citep{DMT00ApproximationsStableOperatorsWell-FoundedFixpointsApplications}
is an algebraic framework designed to study the semantics of non-monotonic logics. 
It was originally designed for characterizing the semantics of logic programming, autoepistemic logic, and default logic, and to resolve longstanding problems on the relation between these formalisms \citep{DMT11ReitersDefaultLogicLogicAutoepistemicReasoning}. Later, it has also been applied to a variety of other domains, including abstract argumentation \citep{S13Approximatingoperatorssemanticsabstractdialecticalframeworks,B19WeightedAbstractDialecticalFrameworksthroughLens},  active integrity constraints \citep{BC18Fixpointsemanticsactiveintegrityconstraints}, stream reasoning \citep{A20Fixedpointsemanticsstreamreasoning},  integrity constraints for the semantic web \citep{BJ21FixpointSemanticsRecursiveSHACL}, and Datalog \citep{pollaci2025}.

The core ideas of AFT are relatively simple: we are interested in fixpoints of an operator $O$ on a given lattice $\langle L,\leq\rangle$. 
For monotonic operators, Tarski's theory guarantees the existence of a least fixpoint, which is of interest in many applications. 
For non-monotonic operators, the existence of fixpoints is not guaranteed; and even if fixpoints exist, it is not clear which would be ``good'' fixpoints.
AFT generalizes Tarki's theory for monotonic operators by making use of a so-called \emph{approximating operator}; this is an operator $A: L^2\to L^2$, that operates on $L^2$, and that is monotonic with respect to the precision order $\leqp$ (defined by $(x,y)\leqp(u,v)$ if $x\leq u$ and $v \leq y)$). The intuition is that elements of $L^2$ approximate elements of $L$:  $(x,y)\in L^2$ approximates $z$ if $x\leq z\leq y$, \ie when $x\leq y$, the tuple $(x,y)$ can be thought of as an interval in $L$. 
Given such an approximator, AFT defines several types of fixpoints (supported fixpoints, a Kripke-Kleene fixpoint, stable fixpoints, a well-founded fixpoint) of interest. 

In several fields of non-monotonic reasoning, 
it is relatively straightforward to define an approximating operator 
and
it turns out that the different types of fixpoints then correspond to existing semantics. 
In this way, AFT clarifies on the one hand how different semantics in a single domain relate, and on the other hand what the relation is between different (non-monotonic) logics. 

Let us illustrate the application of AFT to standard, first-order, logic programming. 
In this setting, the lattice $L$ is the lattice of interpretations with the truth order $I\leq J$ if $P^I\subseteq P^J$ for each predicate $P$. 
The operator is the immediate consequence operator $T_P$, as defined in the seminal work of 
\citet{vK76SemanticsPredicateLogicProgrammingLanguage}. 
Given a logic program (\ie a set of rules), this operator has the property that $q$ holds in $T_P(I)$ if and only if there is a rule 
$q \lrule \varphi $
in $P$ such that $\varphi$ is true in $I$. 
In this setting, pairs $(I,J)$ are seen as \emph{four-valued} interpretations: $I$ represents what is \emph{true} and $J$ what is \emph{possible}. 
A fact is then \emph{true} (resp.~\emph{false}) if it is true (resp.~false) in both $I$ and $J$, \emph{unknown} if it is true in $J$ but not true in $I$ and \emph{inconsistent} if it is true in $I$ but not in $J$. 
The approximating operator $\Psi_P$ is, in this case, nothing else than Fitting's (\citeyear{F02Fixpointsemanticslogicprogrammingsurvey}) four-valued immediate consequence operator, 
which uses Kleene's truth tables to evaluate the body of each rule in a four-valued interpretation. 
For this approximator, the fixpoints defined by AFT correspond to the major semantics of logic programming: supported fixpoints are models of Clark's completion \citep{C77NegationFailure}, stable fixpoints correspond to (partial) stable models \citep{GL88StableModelSemanticsLogicProgramming}, the Kripke-Kleene fixpoint to the Kripke-Kleene model \citep{F85Kripke-KleeneSemanticsLogicPrograms} and the well-founded fixpoint is the well-founded model 
\citep{VRS91Well-FoundedSemanticsGeneralLogicPrograms}. 

This paper is motivated by a need to apply AFT to \emph{higher-order} logic programming that arose in several contexts \citep{DvJD15Semanticstemplatescompositionalframeworkbuildinglogics,DvBJD16CompositionalTypedHigher-OrderLogicDefinitions,CRS18ApproximationFixpointTheoryWell-FoundedSemanticsHigher-Order}.   
An important issue that arises in this context is that using pairs of interpretations no longer allows for an obvious way to evaluate formulas in an approximation. 
Let us illustrate this with a brief example (for more detailed ones, we invite the reader to  look at Examples \ref{ex:graph}, and \ref{ex:manifacturer}).
Consider a logic program in which a first-order predicate $p$ and a second-order predicate $Q$ are defined. 
Now assume that in the body of a rule, the atom $Q(p)$ occurs.   
A tuple $(I,J)$ of interpretations in this case tells us whether $Q(S)$ is true, false, unknown, or inconsistent, for any given set $S$.
However, the interpretation of $p$ via $(I,J)$ is not a set, but a partially defined set, making it hard to evaluate expressions of the form $Q(p)$.
In other words, 
an approximation of the interpretation of $Q$ has to take as argument not only sets, \ie  \emph{exact} elements, but also partially defined sets, \ie \emph{approximate} elements, like the interpretation of $p$ in this example. 
Thus, 
there is a need for a richer space of approximations where approximate objects can be applied to other approximate objects. 

The above example and considerations suggest that spaces of approximations of higher-order objects should be defined inductively from lower-order ones, following the type hierarchy: we start by assigning a \emph{base approximation space} to each type at the bottom of the hierarchy, and then, for each composite type $\tau_1 \to \tau_2$, we define its approximation space as \emph{a certain class of functions} from the approximation space for $\tau_1$ to the approximation space for $\tau_2$, and so on.
This method was heavily inspired by the approach used by
\citet{CRS18ApproximationFixpointTheoryWell-FoundedSemanticsHigher-Order} to obtain a generalization of the well-founded semantics for higher-order logic programs with negation.
Notice that there are two major points in the construction above which are yet not defined: the base approximation spaces, and the class of functions we consider. The main question of this paper is how to define them in a generic way that works in all applications of AFT.

We want to apply the same AFT techniques on approximation spaces at any hierarchy level, \ie on base approximation spaces and the aforementioned sets of functions, which should thus have the same algebraic structure. 
In  Category Theory (CT), the notion of \emph{Cartesian closed category} captures this behavior. 
A category consists of a collection of \emph{objects} and a collection of \emph{morphisms}, \ie relations between objects. For example, we can define the \emph{category of square bilattices} as the one having square bilattices as objects, and monotone functions as morphisms. The objects of a Cartesian closed category $\mathbf{C}$ satisfy a property that can be intuitively understood as follows: \emph{if $A$ and $B$ are two objects of $\mathbf{C}$, then the set of morphisms from $A$ to $B$ is also an object of $\mathbf{C}$}. Hence, if the base approximation spaces are objects of a Cartesian closed category, then the category contains the full hierarchy of spaces we are aiming for. 
We call such a Cartesian closed category an \emph{approximation category} and denote it by $\Approx$.

In this category-theoretic framework, the questions on the nature of the base approximation spaces and the class of functions reduce to defining 
	the objects and the morphisms of $\Approx$. 
	Clearly, this depends on the application we want to use AFT for. Different applications imply different higher-order languages, with different types, and possibly different versions of AFT (standard AFT \citep{DMT00ApproximationsStableOperatorsWell-FoundedFixpointsApplications}, consistent AFT \citep{DMT03Uniformsemantictreatmentdefaultautoepistemiclogics}, or other extensions \citep{CRS18ApproximationFixpointTheoryWell-FoundedSemanticsHigher-Order}). 
	To formalize this, and unify different AFT accounts, we develop the notion of an  \emph{approximation system}. Once a language and the semantics of its types are fixed, we can choose an approximation system that consists, among other things, of a Cartesian closed category $\Approx$, equipped with a function $\mathit{App}$ associating the semantics of a type to an approximation space in $\Approx$. 
	The approximation system also determines which elements of the approximation spaces are \emph{exact}, \ie which elements approximate exactly one element of the semantics of a type, and, for every type, it provides a projection from the exact elements to the objects they represent in the corresponding semantics. This is non-trivial for higher-order approximation spaces, and it is indeed fundamental to obtain a sensible account for AFT for higher-order definitions. 
	
	 In recent work, a stable semantics for higher-order logic programs was defined building on consistent AFT \citep{BCCKPR24StableModelSemanticsHigher-OrderLogicProgramming}. 
	In that work, the approach taken to evaluate an expression of the form $Q(p)$, instead of applying an approximate interpretation for $Q$ to an approximate interpretation for $p$, is to apply the approximate interpretation for $Q$ to \emph{all} exact interpretations for $p$ that are still possible, and returning the least precise approximation of all the results. 
	What this means in effect is that some sort of \emph{ultimate construction} \citep{DMT04Ultimateapproximationapplicationnonmonotonicknowledgerepresentation} is used; this has also been done in other extensions of logic programming \citep{PDB07Well-foundedstablesemanticslogicprogramsaggregates,DvBJD16CompositionalTypedHigher-OrderLogicDefinitions}. 
	\citet{BCCKPR24StableModelSemanticsHigher-OrderLogicProgramming} also pointed out a rather counterintuitive behaviour of the well-founded semantics defined in the work of \citet{CRS18ApproximationFixpointTheoryWell-FoundedSemanticsHigher-Order}, namely that even for simple non-recursive programs, the well-founded model might not assume expected values (leaving all atoms unknown). 
	It is important to mention, though, that this counterintuitive behaviour is caused solely by the treatment of (existentially) quantified variables and not by the algebraic theory (which is the focus of the current paper). Finally, it is interesting to mention that another paper \cite{CR23CategoricalApproximationFixpointTheory} joined CT and AFT, albeit with different perspective and aim: while we treat the whole set of possible approximation spaces as a category to apply any account of AFT to higher-order definitions, \citet{CR23CategoricalApproximationFixpointTheory} view the approximation spaces themselves as categories to provide a novel version of standard AFT.
	
In short, the main contributions of our paper are as follows: 
\begin{enumerate}
	\item We generalize the work of \citet{CRS18ApproximationFixpointTheoryWell-FoundedSemanticsHigher-Order} to a category-theoretic setting. In doing so, we shed light on the general principles underlying their constructions for higher-order logic programing and make their construction applicable to arbitrary current and future non-monotonic reasoning formalism captured by AFT. 
	\item We improve the work of \citet{CRS18ApproximationFixpointTheoryWell-FoundedSemanticsHigher-Order}. In particular, we define a new approximator, which provides the expected well-founded semantics; and we study the concept of \emph{exactness}, previously missing, allowing the use of the theory to define exact \emph{stable models} instead of focusing purely on the well-founded model. 
\end{enumerate}
	
	It is also worth remarking that the generality of the CT environment allows to cover various accounts of AFT, like the ones of \cite{DMT00ApproximationsStableOperatorsWell-FoundedFixpointsApplications},  \cite{DMT03Uniformsemantictreatmentdefaultautoepistemiclogics}, and \cite{CRS18ApproximationFixpointTheoryWell-FoundedSemanticsHigher-Order}, and possibly others. Different versions of AFT are suitable for various situations and cater to specific applications. For instance, consistent AFT \citep{DMT03Uniformsemantictreatmentdefaultautoepistemiclogics} utilizes a three-valued logic and provides a rather intuitive and easily applicable notion of approximation. On the other hand, standard AFT \citep{DMT00ApproximationsStableOperatorsWell-FoundedFixpointsApplications} employs a four-valued approach, with \emph{inconsistent} elements, as presented earlier in this introduction. This can sometimes be of a more difficult use in applications, but shows several advantages from the formal, mathematical standpoint: having a full bilattice, composed of both consistent and inconsistent elements, provides symmetry and allows for duality results to be derived, simplifying the proofs of the fundamental theorems at the core of this version of AFT. It is hence valuable to obtain a framework that covers as many accounts of AFT as possible.
	
	The rest of this paper is structured as follows. 
	In Section \ref{sec:preliminaries}, we provide an overview of the  fundamental concepts from AFT and CT that we use. Section \ref{sec:categoryApprox} presents the novel definitions of approximation system, with the category $\Approx$, and of exact elements of an approximation space. 
	In Section \ref{sec:standardAFT}, we show that the square bilattices form a Cartesian closed category that can be chosen as $\Approx$ for standard AFT. With a suitable choice of $\mathit{App}$ and exact elements, depending on the application at hand, we obtain an approximation system that recovers the framework of standard AFT and extends it to higher-order objects. This section can be skipped by the reader interested uniquely in the AFT version of \cite{CRS18ApproximationFixpointTheoryWell-FoundedSemanticsHigher-Order}, which is addressed in the following section.
	In Section \ref{sec:iclp18}, we apply the novel categorical framework to \cite{CRS18ApproximationFixpointTheoryWell-FoundedSemanticsHigher-Order}. First, in Subsection \ref{sec:sub:approxcat_greeks}, we show that the approximation spaces from \cite{CRS18ApproximationFixpointTheoryWell-FoundedSemanticsHigher-Order} form a Cartesian closed category. 
	Second, in Subsection \ref{sec:sub:approxsystem_greeks} we define an approximation system that 
	 enables us to reconstruct in a simple way (using the general principles outlined above) the semantic elements defined ad-hoc by \citet{CRS18ApproximationFixpointTheoryWell-FoundedSemanticsHigher-Order}. At the same time, the definition of such approximation system also resolves a question that was left open in that work. Namely, what we get now is a clear definition for exact higher-order elements, and, in particular, this allows to determine when a model of a program is two-valued  (see Example \ref{ex:exact_model_iclp18}). We proceed with Subsection \ref{sec:sub:approximator_greeks}, where we present the new approximator that adjusts the behaviour of the well-founded semantics of \cite{CRS18ApproximationFixpointTheoryWell-FoundedSemanticsHigher-Order} for programs with existential quantifiers in the body of rules. We close the subsection with two examples of  logic programs in which we need to apply an approximate object on another approximate object.
	 We conclude in Section~\ref{sec:conclusions}.

	\section{Preliminaries}\label{sec:preliminaries}
	
	In this section, we provide a concise 
	introduction to the formal concepts we utilize throughout the paper. We divide the content into two subsections. In the former (Section \ref{sec:AFT}), we outline the core ideas at the foundation of AFT, and we present in more detail the parts of the work of \citet{CRS18ApproximationFixpointTheoryWell-FoundedSemanticsHigher-Order} that we aim to modify in Section \ref{sec:iclp18}.
In the second subsection (Section \ref{sec:cattheory}), we present the notions of Category Theory (CT) we need, with the definition of Cartesian closed category being the key concept. For further information on CT, we refer to the book by
	\cite{R17Categorytheorycontext}.


	\subsection{Approximation Fixpoint Theory}\label{sec:AFT}
	
	
	AFT generalizes Tarki's theory to non-monotonic operators, with the initial goal of studying the semantics of non-monotonic logics. As such, AFT heavily relies on the following notions from order theory.
	
	A \emph{partially ordered set} (poset) $\mathcal{P}$ is a set equipped with a partial order, \ie a reflexive, antisymmetric, transitive relation. We denote a poset by $\mathcal{P}=\langle P, \leq_P \rangle$, where $P$ is the underlying set, and $\leq_P$ the partial order. By abuse of notation, 
	when referring to a poset $\mathcal{P}$, we often use the notation for the underlying set $P$ in place of the calligraphic one.
	We denote by $\op{\mathcal{P}}$ the poset with the same underlying set as $\mathcal{P}$ but opposite order, \ie $\op{\mathcal{P}}=\langle P, \geq_P\rangle$. 
	Given a subset $S\subseteq P$, a lower bound $l$ of $S$ is the \emph{greatest lower bound of $S$}, denoted by $\glb S$, if it is greater than any other lower bound of $S$. Analogously, an upper bound $u$ of $S$ is the \emph{least upper bound of $S$}, denoted by $\lub S$, 
	if it is lower than any other upper bound of $S$. 
	A \emph{chain complete poset} (cpo)  is a poset $C$ such that for every chain $S\subseteq C$, \ie a totally ordered subset, $\lub S$ exists. 
	A \emph{complete join semilattice} is a poset $J$ such that for any subset $S\subseteq J$,  $\lub S$ exists. 
	A  \emph{complete lattice} is a poset $L$ such that for any subset $S\subseteq L$, both $\glb S$ and $\lub S$ exist. 
	A function $f\colon P_1\to P_2$ between posets is \emph{monotone} if for all $x,y\in P_1$ such that $x\leq_{P_1}y$, it holds that $f(x)\leq_{P_2}f(y)$. 
	We refer to functions $O\colon C\to C$ with domain equal to the codomain as \emph{operators}. An element $x\in C$ is a \emph{fixpoint}
	of $O$ if $O(x)=x$.
	By Tarski's least fixpoint theorem, every monotone operator $O$ on a cpo has a least fixpoint, denoted $\lfp(O)$. To use a similar principle for operators stemming from non-monotonic logics, standard AFT \citep{DMT00ApproximationsStableOperatorsWell-FoundedFixpointsApplications} considers, for each complete lattice $\mathcal{L}$, its associated square bilattice $\langle L^2, \leqp\rangle$, where $\leqp$ is the \emph{precision order} on the Cartesian product $L^2$, \ie  $(x_1,y_1)\leqp (x_2,y_2)$ iff $x_1\leq_L x_2$ and $y_2\leq_L y_1$. 
	A square bilattice $\langle L^2, \leqp\rangle$ can be viewed as an approximation of $L$: an element $(x,y)\in L^2$ such that $x\leq_L y$ ``approximates'' all the values $z\in L$ such that $x\leq_L z\leq_L y$. Such pairs $(x,y)$ with $x\leq_L y$ are called \emph{consistent}. Pairs of the form $(x,x)\in L^2$ are called \emph{exact}, since they approximate only one element of $L$. 
	
	An \emph{approximator} $A\colon L^2\to L^2$ is a monotone operator that is \emph{symmetric}, \ie for all $(x,y)\in L^2$ it holds that $A_1(x,y)=A_2(y,x)$, where $A_1,A_2\colon L^2 \to L$ are the components of $A$, i.e.\ $A(x,y)=(A_1(x,y),A_2(x,y))$. 
	An approximator $A\colon L^2 \to L^2$ \emph{approximates} an operator $O\colon L\to L$ if for all $x\in L$, $A(x,x)=(O(x),O(x))$. Since $A$ is by definition monotone, by Tarski's theorem $A$ has a least fixpoint, which is called the \emph{Kripke-Kleene} fixpoint. Moreover, given an approximator $A$, there are three other operators which deserve our attention, together with their fixpoints: the operator approximated by $A$,
	$\subscript{O}{A}\colon x\in L\mapsto A_1(x,x)\in L$ whose fixpoints are called \emph{supported}; the \emph{stable operator} $\subscript{S}{A}\colon x\in L\mapsto \lfp(A_1(\cdot , x))\in L$ with the \emph{stable} fixpoints (where $A_1(\cdot, x)\colon y\in L\mapsto A_1(y,x)\in L$); and the \emph{well-founded operator} $\subscript{\mathcal{S}}{A}\colon (x,y)\in L^2\mapsto (\subscript{S}{A}(y), \subscript{S}{A}(x))\in L^2$, whose least fixpoint is referred to as the \emph{well-founded} fixpoint.  If $A$ is the four-valued immediate consequence operator \citep{F02Fixpointsemanticslogicprogrammingsurvey}, then the aformentioned four types of fixpoint correspond to the homonymous semantics of logic programing \citep{DMT00ApproximationsStableOperatorsWell-FoundedFixpointsApplications,DBV12ApproximationFixpointTheorySemanticsLogicAnswers}.
	
	The concepts presented so far are part of what we refer to as \emph{standard AFT}  \citep{DMT00ApproximationsStableOperatorsWell-FoundedFixpointsApplications}, i.e.\ the first account of AFT. Following this initial take, several other variants have been developed:  consistent AFT \citep{DMT03Uniformsemantictreatmentdefaultautoepistemiclogics}, non-deterministic AFT \citep{HAB24Non-deterministicapproximationfixpointtheoryapplicationdisjunctive}, or other extensions \citep{CRS18ApproximationFixpointTheoryWell-FoundedSemanticsHigher-Order}. In particular, the latter already proposes a way to deal with higher-order logic programs via an extension of consistent AFT. However, as already highlighted by \citet{BCCKPR24StableModelSemanticsHigher-OrderLogicProgramming}, the work of \citet{CRS18ApproximationFixpointTheoryWell-FoundedSemanticsHigher-Order} had some hidden problematic features. They can be summarised as follows:
	
	\begin{enumerate}
		\item \emph{The Approximator:} the well-founded semantics obtained via the approximator defined by \citet{CRS18ApproximationFixpointTheoryWell-FoundedSemanticsHigher-Order} does not  behave as expected when an existential quantifier occurs in the body of a rule. Take for instance the logic program with just the simple rule $\mathsf{p} \leftarrow \mathsf{R} \; \wedge \sim \mathsf{R}$, where $\mathsf{R}$ is a variable ranging over the booleans $\{\lfalse, \ltrue\}$. If we naively ground such program, we obtain $\mathsf{p}\leftarrow \lfalse \wedge \ltrue$ and $\mathsf{p}\leftarrow \ltrue \wedge \lfalse$, and $\mathsf{p}$ would clearly be evaluated as false. However, the approach adopted by \citet{CRS18ApproximationFixpointTheoryWell-FoundedSemanticsHigher-Order} uses \emph{approximated} elements. In particular, variables of type boolean range over $\{\lfalse, \ltrue,\mathbf{u}\}$. In more detail, for the logic program  $\mathsf{p} \leftarrow \mathsf{R}\; \wedge \sim \mathsf{R}$,  the approximator assignes to $\mathsf{p}$ the least upper bound of the body $\mathsf{R}\; \wedge \sim \mathsf{R}$, with $\mathsf{R}$ ranging over $\{\lfalse, \ltrue,\mathbf{u}\}$. Since such least upper bound is computed with respect to the \emph{truth order} $\lfalse \leq \mathbf{u}\leq \ltrue$, the predicate $\mathsf{p}$ is assigned the value $\lub\{\ltrue\wedge\lfalse, \lfalse\wedge\ltrue, \mathbf{u}\wedge\mathbf{u}\}=\mathbf{u}$ under the well-founded semantics. This contradicts the more intuitive and standard two-valued approach via grounding, which assigns $\lfalse$ to $\mathsf{p}$. In a way, allowing the existentially quantified variable to vary over all approximated elements seems to unnecessarily increase (w.r.t.\ the truth order) the value of the defined predicate, when evaluated under the well-founded semantics.
		\item \emph{The Notion of Exactness:} the work of \citet{CRS18ApproximationFixpointTheoryWell-FoundedSemanticsHigher-Order} lacks the notion of exactness for higher-order objects, which is fundamental in the context of AFT and rather non-trivial in the higher-order setting: exactness allows to recognize whether an approximated object, i.e.\ a pair $(x,y)$ in the bilattice, represents just one \emph{real} element of the lattice, and, in particular, when a model is two-valued. In other words, having such concept makes it possible to study not just the well-funded models, but also the stable ones.
	\end{enumerate}

In Section \ref{sec:iclp18}, we will show how we can use our novel concepts in the framework of \citet{CRS18ApproximationFixpointTheoryWell-FoundedSemanticsHigher-Order} to solve the issues listed above.
	
	%
	
	\subsection{Category Theory}\label{sec:cattheory}
	

	Category Theory (CT) studies mathematical structures and the relations between them, through the notion of a \emph{category}. Intuitively, a category $\catname{C}$ consists of a collection $\Obj{\catname{C}}$ of \emph{objects} and a collection $\Hom{\catname{C}}$ of relations, called \emph{morphisms}, between objects, satisfying some basic properties: every morphism $f$ has a \emph{domain} $s(f)$ and a \emph{codomain} $t(f)$, morphisms can be composed, and so on.
	
		\begin{definition}\label{def:category}
		A \emph{category} $\catname{C}$ consists of
		\begin{itemize}
			\item a collection of \emph{objects} $\Obj{\catname{C}}$,
			\item a collection of \emph{morphisms} $\Hom{\catname{C}}$,
			\item for every morphism $f\in\Hom{\catname{C}}$, an object $s(f)$ called the \emph{source} (or \emph{domain}) of $f$, and an object $t(f)$ called the \emph{target} (or \emph{codomain}) of $f$,
			\item for every object $X\in \Obj{\catname{C}}$, a morphism  $\mathit{id}_X$ called the \emph{identity morphism},
			\item for every two morphisms $f,g\in \Hom{\catname{C}}$ with $t(f)=s(g)$, a morphism $g\circ f$, called their \emph{composite},
		\end{itemize}
		such that 
		\begin{itemize}
			\item for all $f,g\in \Hom{\catname{C}}$ such that $t(f)=s(g)$, $s(g\circ f)=s(f)$
			\item for all $f,g\in \Hom{\catname{C}}$ such that $t(f)=s(g)$, $t(g\circ f)=t(g)$,
			\item for all $X\in \Obj{\catname{C}}$, $s(\mathit{id}_X)=t(\mathit{id}_X)=X$,
			\item for all $f,g,h\in\Hom{\catname{C}}$ such that $t(f)=s(g)$ and $t(g)=s(h)$, $(h\circ g)\circ f=h\circ(g\circ f)$,
			\item for all $X,Y\in \Obj{\catname{C}}$ and for all $f\in \Hom{\catname{C}}$ such that $s(f)=X$ and $t(f)=Y$, $f\circ \mathit{id}_X=f$ and $\mathit{id}_Y\circ f=f$.
		\end{itemize}
	\end{definition}

	In this paper, objects will always be certain ordered sets, and morphisms will be monotone functions. 	In the same way as morphisms between objects encode relations within a category, a morphism of categories, called a \emph{functor}, describes the relation between two categories.
	
	\begin{definition}\label{def:functor}
		Let $\catname{C},\catname{D}$ be two categories. A \emph{functor} $F\colon \catname{C}\to \catname{D}$ consists of a function
		$F_0\colon \Obj{\catname{C}}\to \Obj{\catname{D}}$ between the classes of objects, and a function $F_1\colon \Hom{\catname{C}}\to \Hom{\catname{D}}$, such that it respects target and source of morphisms, identity morphisms, and composition.
	\end{definition}
	
	For each $x,y\in \Obj{\catname{C}}$, we denote by $\hom_\catname{C}(x,y)$ the set of morphisms of $\catname{C}$ with domain $x$ and codomain $y$.
	
	\begin{definition}\label{def:full_faithful_embedding_subcat}
		A functor $F\colon \catname{C}\to \catname{D}$ is
		\begin{itemize}
			\item \emph{full} if for each $X,Y\in \Obj{\catname{C}}$, the map $F_1\!\!\restriction_{\hom_\catname{C}(X,Y)}\colon \hom_\catname{C}(X,Y) \to \hom_\catname{D}(F_0(X),F_0(Y))$ is surjective,
			\item \emph{faithful} if for each $X,Y\in \Obj{\catname{C}}$, the map $F_1\!\!\restriction_{\hom_\catname{C}(X,Y)}\colon \hom_\catname{C}(X,Y) \to \hom_\catname{D}(F_0(X),F_0(Y))$ is injective,
			\item \emph{embedding} if $F$ is faithful and $F_0$ is injective.
		\end{itemize}
		The domain of a full embedding $F\colon \catname{C}\to \catname{D}$ is called a \emph{full subcategory} of the codomain (denoted as $ \catname{C}\subseteq \catname{D}$).
	\end{definition}

	
	It is easy to see that we can define a category $\poset$ with objects the posets, and as morphisms the monotone functions between posets. We denote by $\CPO$, $\compjoinsl$, and $\CLat$ the full subcategories of $\poset$ with objects the cpo's, the complete join semilattices, and the complete lattices, respectively. Clearly, it also holds that $\CLat\subseteq\CPO\subseteq\poset$ and $\CLat\subseteq\compjoinsl\subseteq\poset$.
	
We are interested in inductively building \emph{approximation spaces} for higher-order concepts starting from \emph{base} ones. To be able to perform this construction, we need the approximation spaces to belong to a \emph{Cartesian closed} category, \ie a category $\catname{C}$ with a \emph{terminal object}, \emph{products}, and \emph{exponentials}. 


	\begin{definition}\label{def:terminal}
	$T\in\Obj{\catname{C}}$ is \emph{terminal} if for each $A\in \Obj{\catname{C}}$ there exists a unique morphism $f\colon A\to T$.
\end{definition}

	For instance, the poset with one element and trivial order is the terminal object of $\poset$, $\CPO$, $\compjoinsl$, and $\CLat$. 

\begin{definition}\label{def:product}
	Let $A_1, A_2\in \Obj{\catname{C}}$. A \emph{product} of $A_1$ and $A_2$ is an object of $\catname{C}$, denoted by $A_1\times A_2$, equipped with two morphisms $\pi_1\colon A_1\times A_2\to A_1$ and $\pi_2\colon A_1\times A_2\to A_2$, called \emph{first}, and \emph{second projection} respectively, such that for any $B\in \Obj{\catname{C}}$ and any morphisms $f_1\colon B\to A_1$ and $f_2\colon B\to A_2$, there exists a unique morphism $f_1\times f_2\colon B\to A_1\times A_2$ such that $\pi_1\circ (f_1\times f_2)=f_1$ and $\pi_2\circ (f_1\times f_2)=f_2$.
\end{definition}

	In $\poset$, and analogously for $\CPO$, $\compjoinsl$, and $\CLat$, the product of two objects $\mathcal{P}_1$ and $\mathcal{P}_2$ is the  Cartesian product of $P_1$ and $P_2$ equipped with the \emph{product order}, \ie $(x_1,y_1)\leqpr (x_2, y_2)$ if and only if $x_1\leq_{P_1} x_2$ and $y_1\leq_{P_2} y_2$.  The projections $\pi_1$ and $\pi_2$ are given by the usual Cartesian projections.

\begin{definition}\label{def:exponential}
	Let $A_1, A_2\in \Obj{\catname{C}}$. An \emph{exponential} of $A_1$ and $A_2$ is an object of $\catname{C}$, denoted by $A_2^{A_1}$, equipped with a morphism $\mathit{ev}\colon A_2^{A_1}\times A_1\to A_2$, called the \emph{evaluation}, such that for any $B\in \Obj{\catname{C}}$ and any morphism $f\colon B\times A_1\to A_2$, there exists a unique morphism $f'\colon B\to A_2^{A_1}$ such that $\mathit{ev}\circ(f'\times \mathit{id})=f$. 
\end{definition}

	In $\poset$, and analogously for $\CPO$, $\compjoinsl$, and $\CLat$, the exponential $\mathcal{P}_2^{\mathcal{P}_1}$  is the set of monotone functions from $P_1$ to $P_2$ equipped with  the \emph{pointwise order} (induced by $\leq_{P_2}$), \ie $f_1\leqpt f_2$ if and only if for all $x\in P_1$, $f_1(x)\leq_{P_2} f_2(x)$.  The evaluation $\mathit{ev}$ is given by the usual function evaluation, \ie $\mathit{ev}(f,x)=f(x)$.

\begin{definition}\label{def:Cartesian_closed}
	A category \catname{C} is \emph{Cartesian closed} if it has a terminal object, and for each $A_1, A_2\in\Obj{\catname{C}}$, there exist $A_1\times A_2\in\Obj{\catname{C}}$ and $A_2^{A_1}\in\Obj{\catname{C}}$. 
\end{definition}

By our prior observations, it follows that $\poset, \CPO$, $\compjoinsl$, and $ \CLat$ are all Cartesian closed.

	In the context of AFT, we are often interested in the space of interpretations over a possibly infinite vocabulary	of a logic program. In order 
to include this, we need another notion from CT, namely a generalized version of the categorical product for (possibly infinite) families of objects. 

\begin{definition}\label{def:generalized_product}
	Let $\{A_i\}_{i\in I}$ be a family of objects of a category $\catname{C}$ indexed by $I$. The \emph{(generalized) product} of the family $\{A_i\}_{i\in I}$ is an object of $\catname{C}$, denoted by $\Pi_{i\in I} A_i$, equipped with morphisms $\pi_i\colon \Pi_{i\in I} A_i\to A_i$, called the \emph{$i$-th projection}, such that for all $B \in\Obj{\catname{C}}$ and for all families of  morphisms  $\{\varphi_i\colon B\to A_i\}_{i\in I}$ indexed by $I$, there exists a unique $\psi\colon B\to  \Pi_{i\in I} A_i $ such that for all $i\in I$ it holds that $\varphi_i=\pi_i\circ \psi$.
\end{definition}

We say that a category $\catname{C}$ \emph{has generalized products} if  for all families $\{A_i\}_{i\in I}$ of objects of $\catname{C}$ the product $\Pi_{i\in I} A_i\in \Obj{\catname{C}}$ exists. Clearly, if $\catname{C}$ is  Cartesian closed and the index $I$ is finite, this product always exists; but this is not always the case for infinite $I$. However, we will show in the remainder of this subsection that every full subcategory of $\poset$ has generalized products (Proposition \ref{prop:full_subcat_poset_genproducts}). We wish to warn the reader that the following category-theoretic notions are only meant to support the proofs of Propositions \ref{prop:full_subcat_poset_genproducts} and \ref{prop:iso_product_functions}, and will not be used in the next sections of the paper.

The generalized product is a special case of a very common construction in CT, called the \emph{limiting cone}, or simply \emph{limit}. We proceed with the definitions leading to the concept of limit.

\begin{definition}
	A \emph{diagram} $X_\bullet$ in a category $\catname{C}$ is
	\begin{enumerate}
		\item a set $\{X_i\}_{i\in I}$ of objects of $\catname{C}$,
		\item for every pair $(i,j)\in I\times I$, a set $\{f_\alpha\colon X_i \to X_j\}_{\alpha\in I_{i,j}}$ of morphisms,
		\item for every $i\in I$ an element $\epsilon_i \in I_{i,i}$,
		\item for each $(i,j,k)\in I\times I \times I$ a function $\mathit{comp}_{i,j,k}
		\colon I_{i,j}\times I_{j,k} \to I_{i,k}$ such that
		\begin{enumerate}
			\item $\mathit{comp}$ is associative and unital with the $f_{\epsilon_i}$'s being the neutral elements,
			\item for every $i\in I$, $f_{\epsilon_i}=\mathit{id}_{X_i}$ is the identity morphism of $X_i$,
			\item for every two composable morphisms $f_\alpha\colon X_i \to X_j$ and $f_\beta \colon X_j\to X_k$, it holds that $f_\beta \circ f_\alpha =f_{\mathit{comp}_{i,j,k}(\alpha, \beta)}$.
		\end{enumerate}
	\end{enumerate}
\end{definition}

\begin{definition}
	Let $X_\bullet=(\{f_\alpha\colon X_i\to X_j\}_{i,j\in I, \alpha\in I_{i,j}}, \mathit{comp})$ be a diagram in $\catname{C}$. A \emph{cone} over $X_\bullet$ is an object $X\in\Obj{\catname{C}}$ together with, for each $i\in I$, a morphism $p_i\colon X\to X_i\in\Hom{\catname{C}}$ such that for all $(i,j)\in I\times I$ and for all $\alpha\in I_{i,j}$, it holds that $f_\alpha \circ p_i =p_j$.
	\\	Moreover, the \emph{limiting cone} or \emph{limit} of $X_\bullet$ is, if it exists, the cone over $X_\bullet$ which is \emph{universal} among all possible cones over $X_\bullet$.
\end{definition}

\begin{definition}
	A functor $F\colon \catname{C}\to \catname{D}$ \emph{reflects all limits} if for all diagrams $X_\bullet$ in $\catname{C}$, and for all cones $C$ over $X_\bullet$ such that $F(C)$ is a limiting cone of $F(X_\bullet)$, $C$ is a limiting cone of $X_\bullet$. 
\end{definition}

\begin{proposition}\label{prop:reflects_limits}
	A full and faithful functor reflects all limits.
\end{proposition}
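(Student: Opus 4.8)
The plan is to verify the universal property of the cone $C$ directly, transporting the universal property of the limiting cone $F(C)$ back along $F$ by means of the hom-set bijection that fullness and faithfulness jointly provide. I would fix a diagram $X_\bullet$ in $\catname{C}$ and a cone $C$ over it, with apex $X\in\Obj{\catname{C}}$ and legs $p_i\colon X\to X_i$, and assume as hypothesis that $F(C)$ — the cone with apex $F(X)$ and legs $F(p_i)$ — is a limiting cone of $F(X_\bullet)$ in $\catname{D}$. The goal is to show $C$ is a limiting cone of $X_\bullet$, i.e.\ that it is universal among cones over $X_\bullet$.

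First I would take an arbitrary competing cone $C'$ over $X_\bullet$ with apex $B\in\Obj{\catname{C}}$ and legs $q_i\colon B\to X_i$, so that $f_\alpha\circ q_i=q_j$ for every $\alpha\in I_{i,j}$. I must produce a unique mediating morphism $\psi\colon B\to X$ with $p_i\circ\psi=q_i$ for all $i$. The key preliminary observation is that $F$ sends cones to cones: by functoriality $F(f_\alpha)\circ F(q_i)=F(f_\alpha\circ q_i)=F(q_j)$, so $F(C')$ is a cone over $F(X_\bullet)$ with apex $F(B)$. Since $F(C)$ is limiting, there is a unique $\phi\colon F(B)\to F(X)$ in $\catname{D}$ with $F(p_i)\circ\phi=F(q_i)$ for all $i$. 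Now fullness supplies a morphism $\psi\colon B\to X$ in $\catname{C}$ with $F(\psi)=\phi$; then $F(p_i\circ\psi)=F(p_i)\circ\phi=F(q_i)$, and faithfulness forces $p_i\circ\psi=q_i$, so $\psi$ is indeed a mediating morphism. For uniqueness, if $\psi'\colon B\to X$ also satisfies $p_i\circ\psi'=q_i$, then $F(p_i)\circ F(\psi')=F(q_i)$ for all $i$, whence $F(\psi')=\phi=F(\psi)$ by the uniqueness clause of the limit $F(C)$; faithfulness then yields $\psi'=\psi$.

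I do not expect a serious obstacle: this is essentially a ``transport along an equivalence'' argument. The only points that demand care are bookkeeping rather than ideas. One must check that $F$ preserves the cone condition (pure functoriality, as above), and keep straight the division of labour between the two halves of the hypothesis on $F$: fullness is exactly what supplies \emph{existence} of the lift $\psi$ of $\phi$, while faithfulness supplies both the compatibility $p_i\circ\psi=q_i$ and the \emph{uniqueness} of $\psi$. Notably, no assumption that $\catname{D}$ possesses the relevant limits is required, since the universal morphism $\phi$ is handed to us directly by the hypothesis that $F(C)$ is already a limiting cone.
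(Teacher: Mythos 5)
Your proof is correct, and it is precisely the standard argument behind the result: the paper itself does not spell out a proof but simply cites Lemma 3.3.5 of Riehl's \emph{Category Theory in Context}, whose proof proceeds exactly as yours does (transport the competing cone through $F$, use the universal property of $F(C)$ to get $\phi$, lift it by fullness, and verify compatibility and uniqueness by faithfulness). Your closing remarks correctly identify the division of labour between fullness and faithfulness and the fact that no limits need exist in $\catname{D}$ beyond the one hypothesized, so nothing is missing.
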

\begin{proof}
	Lemma 3.3.5 in \citep{R17Categorytheorycontext}. 
\end{proof}

We are finally able to prove that every full subcategory of $\poset$ has geeralized products.

\begin{proposition}\label{prop:full_subcat_poset_genproducts}
	If $\catname{C}\subseteq\poset$, then $\catname{C}$ has generalized products.
\end{proposition}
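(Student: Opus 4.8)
The plan is to exhibit the generalized product explicitly and then reduce everything to the already-understood case of $\poset$. First I would take a family $\{A_i\}_{i\in I}$ of objects of $\catname{C}$ and form the Cartesian product $P=\prod_{i\in I}A_i$ of the underlying sets, endowed with the product order: $(x_i)_{i\in I}\leqpr(y_i)_{i\in I}$ if and only if $x_i\leq_{A_i}y_i$ for every $i\in I$. This is manifestly a poset, and each Cartesian projection $\pi_j\colon P\to A_j$ is monotone by construction, so $P$ together with $(\pi_j)_{j\in I}$ is a cone over the family in $\poset$.

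Next I would verify that this cone satisfies the universal property of the product in $\poset$. Given any poset $Q$ and a family of monotone maps $f_i\colon Q\to A_i$, the tupling $\langle f_i\rangle_{i\in I}\colon q\mapsto(f_i(q))_{i\in I}$ is monotone precisely because the order on $P$ is computed coordinatewise; it satisfies $\pi_j\circ\langle f_i\rangle_{i\in I}=f_j$ for every $j$, and it is the unique such map since any competitor must agree with it in every coordinate. Hence $\poset$ has all generalized products, realized by $P$.

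It then remains to transfer this to the full subcategory $\catname{C}$, and here fullness does most of the work. For any $Q\in\Obj{\catname{C}}$ and morphisms $f_i\colon Q\to A_i$ of $\catname{C}$, the mediating map $\langle f_i\rangle_{i\in I}$ is a monotone function between two objects of $\catname{C}$, so by the definition of a full subcategory it is automatically a morphism of $\catname{C}$; and uniqueness in $\catname{C}$ follows from uniqueness in $\poset$ because every $\catname{C}$-morphism is in particular a $\poset$-morphism. Thus, once $P$ is known to be an object of $\catname{C}$, the same cone $(\pi_j)_{j\in I}$ witnesses the product in $\catname{C}$.

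The step I expect to be the crux is exactly this last membership, namely that the product object $P$ lies in $\Obj{\catname{C}}$. The construction and the universal property are routine and coordinatewise, and fullness supplies the mediating morphisms for free, so the whole weight of the statement rests on $\catname{C}$ being closed under the product order on arbitrary Cartesian products. I would therefore isolate this closure as the key point and check it against the defining property of $\catname{C}$ as a subcategory of $\poset$; in the motivating cases this is immediate, since an arbitrary product of cpo's, of complete join semilattices, or of complete lattices is again of the same kind, with the relevant suprema and infima taken componentwise, so $\CPO$, $\compjoinsl$, and $\CLat$ are all closed in this sense.
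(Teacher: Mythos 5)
Your construction is essentially the concrete unfolding of the paper's argument: the paper proves this proposition abstractly, by noting that the full embedding $F\colon \catname{C}\to\poset$ reflects all limits (Proposition \ref{prop:reflects_limits}, citing Riehl) and that $\poset$ has generalized products, whereas you build the product poset $P=\Pi_{i\in I}A_i$ by hand, verify its universal property coordinatewise, and use fullness to transfer the mediating morphism to $\catname{C}$. Those parts of your proposal are correct and match the paper up to packaging. The problem is the step you yourself flagged as the crux and then left unresolved: nothing in the definition of a full subcategory (Definition \ref{def:full_faithful_embedding_subcat}) forces $P$ to be an object of $\catname{C}$, and this gap cannot be filled, because the statement is false at this level of generality. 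Take $\catname{C}$ to be the full subcategory of $\poset$ whose only object is the two-element chain $\mathbf{2}$: any candidate product of $\mathbf{2}$ with itself would have apex $\mathbf{2}$, but there are only three monotone maps $\mathbf{2}\to\mathbf{2}$, so the tupling assignment $h\mapsto(\pi_1\circ h,\pi_2\circ h)$ maps a three-element set into the nine pairs $(f_1,f_2)$ and cannot be surjective; this $\catname{C}$ has no binary products (and no terminal object, i.e., no empty product). So the proposition needs the extra hypothesis that $\Obj{\catname{C}}$ is closed, up to isomorphism, under products computed in $\poset$.

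It is worth saying explicitly that the paper's own proof silently makes the same assumption: reflection of limits only applies to a cone that already lives in $\catname{C}$, i.e., one whose apex is an object of $\catname{C}$, so the appeal to Proposition \ref{prop:reflects_limits} establishes only that a $\poset$-product whose underlying poset happens to lie in $\Obj{\catname{C}}$ is also the product in $\catname{C}$. Both proofs are therefore valid exactly where the proposition is actually invoked, since the relevant object classes ($\CPO$, $\compjoinsl$, $\CLat$, and later $\greeks$ and $\BiLat$) are closed under componentwise products — precisely the closure you checked for the motivating cases. Your diagnosis that ``the whole weight of the statement rests on'' this closure is the right one; the proposal falls short only in presenting the closure as something to be ``checked against the defining property of $\catname{C}$,'' when in fact it is an independent hypothesis that the bare definition of a full subcategory does not supply.
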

	\begin{proofwithoutqed}
	By Definition \ref{def:full_faithful_embedding_subcat}, there exists a full embedding $F\colon \catname{C}\to \poset$. By Proposition \ref{prop:reflects_limits}, $F$ reflects all limits. Moreover, it is clear that $\poset$ has generalized products. Since generalized products are a type of limit, we conclude that $\catname{C}$ has generalized products.\qed
\end{proofwithoutqed}

In a full subcategory of $\poset$ we can rewrite a generalized product as a poset of functions.
	Given two posets $\mathcal{P}_1, \mathcal{P}_2\in\Obj{\poset}$, we denote by $(\mathcal{P}_1\to\mathcal{P}_2)\in\Obj{\poset}$ the poset of functions from $\mathcal{P}_1$ to $\mathcal{P}_2$ ordered with the pointwise order. In particular, notice that $(\mathcal{P}_1\to\mathcal{P}_2)$ may contain non-monotone functions.
	
\begin{proposition}\label{prop:iso_product_functions}
	Let $\catname{C}$ be a full subcategory of $\poset$, $X\in\Obj{\poset}$, and $Y\in \Obj{\catname{C}}$. Then there exists an isomorphism $(X\to Y)\cong \Pi_{x\in X} Y$ in $\catname{C}$. 
\end{proposition}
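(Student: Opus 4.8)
The plan is to unwind both sides to concrete posets and to observe that they share the same underlying set and order, the required isomorphism being the familiar identification of indexed families with functions. First I would invoke Proposition~\ref{prop:full_subcat_poset_genproducts} to obtain the generalized product $\Pi_{x\in X} Y$ in $\catname{C}$, and recall that, as for products in $\poset$, it is carried by the Cartesian product of the constant family $\{Y\}_{x\in X}$ indexed by the underlying set of $X$, with projections $\pi_x$ given by evaluation at $x$ and with the product order: a family $(y_x)_{x\in X}$ satisfies $(y_x)_{x\in X}\leqpr (y'_x)_{x\in X}$ iff $y_x\leq_Y y'_x$ for every $x\in X$.

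The key observation is that an element $(y_x)_{x\in X}$ of this product is exactly a function $g\colon X\to Y$ on underlying sets, via $g(x)=y_x$. Crucially, the index $X$ enters only through its underlying set, so no monotonicity constraint is imposed on $g$; this is what makes the product agree with $(X\to Y)$, which by definition also collects all (possibly non-monotone) functions. I would make this precise through the bijection $\Phi\colon (X\to Y)\to \Pi_{x\in X} Y$, $\Phi(g)=(g(x))_{x\in X}$, whose inverse sends $(y_x)_{x\in X}$ to the map $x\mapsto y_x$. Comparing the two orders is then immediate: $\Phi(g)\leqpr \Phi(g')$ holds iff $g(x)\leq_Y g'(x)$ for all $x\in X$, which is precisely $g\leqpt g'$ in the pointwise order on $(X\to Y)$; applying the same equivalence to $\Phi^{-1}$ shows it is monotone as well, so $\Phi$ is an order-isomorphism.

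Since $\Phi$ is in fact the canonical identification, the underlying poset of $\Pi_{x\in X} Y$ coincides with $(X\to Y)$; in particular $(X\to Y)\in\Obj{\catname{C}}$, and as $\catname{C}$ is a full subcategory of $\poset$ both $\Phi$ and $\Phi^{-1}$ are morphisms of $\catname{C}$, so that $\Phi$ witnesses $(X\to Y)\cong \Pi_{x\in X} Y$ in $\catname{C}$. I do not anticipate a genuine difficulty here: the content is the uncurrying bijection together with the fact that the generalized product is formed over the bare underlying set of $X$. The only point requiring care---and the place where the statement has real content---is precisely this last fact, namely that indexing by $X$ discards its order and hence yields all functions $X\to Y$ rather than only the monotone ones, together with the appeal to fullness of $\catname{C}$ to promote the $\poset$-isomorphism to an isomorphism in $\catname{C}$ and to place $(X\to Y)$ among the objects of $\catname{C}$.
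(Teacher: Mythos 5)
Your proposal is correct and takes essentially the same route as the paper's proof: both invoke Proposition~\ref{prop:full_subcat_poset_genproducts} to place $\Pi_{x\in X} Y$ in $\Obj{\catname{C}}$ and then exhibit the same currying bijection $f\mapsto (f(x))_{x\in X}$, verifying that it and its inverse are monotone by matching the pointwise order against the product order. Your explicit remark that the product is indexed by the bare underlying set of $X$, so that no monotonicity constraint is imposed and all of $(X\to Y)$ is captured, is the same observation the paper records when noting that $(X\to Y)$ may contain non-monotone functions.
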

	\begin{proof}
	By Proposition \ref{prop:full_subcat_poset_genproducts}, $ \Pi_{x\in X} Y \in \Obj{\catname{C}}$. Moreover, there exists a bijection $\varphi\colon (X\to Y)\to \Pi_{x\in X} Y$, defined by $\varphi(f)=(f(a))_{a\in X}$, with inverse $\varphi^{-1}\colon  \Pi_{x\in X} Y \to (X\to Y)$, defined by, for all $a\in X$, $\varphi^{-1}((y_x)_{x\in X})(a)=y_a$. Because of the definition of pointwise order and product order, it is immediate to show that both $\varphi$ and $\varphi^{-1}$ preserve the orders, \ie they are monotone.
\end{proof}
	
	%

	\section{The Approximation System}\label{sec:categoryApprox}
	
	In this section, we introduce the notions of \emph{approximation category} and of \emph{approximation system}, which constitute the core of the theoretical framework for AFT we developed. 
	
	Let $\mathcal{L}$ be a higher-order language based on a hierarchy of types
	$\mathbb{H}$
	comprising of \emph{base types} $\tau$, and two kinds of composite types: \emph{product types} $\Pi_{i\in I} \tau_i$, and \emph{morphism types} $\tau_1\to\tau_2$. For instance, a base type could be the boolean type $o$ or the type $\iota$ of individuals, whereas in the composite types we may find the type $\iota\to o$, which is the type of unary first-order predicates
	. We denote by $\mathcal{B}_\mathbb{H}$ 
	the set of base types. For the sake of simplicity, we omit the subscript of $\mathcal{B}$ when it is clear from the context of use.

		We associate to each type $\tau$ of $\mathcal{B}_\mathbb{H}$,  an object $E_\tau\in\Obj{\poset}$, and we define inductively for all $\{\tau_i\}_{i\in I}\subseteq \mathbb{H}$, $E_{\Pi_{i\in I}\tau_i}=\Pi_{i\in I}E_{\tau_i}$, and   for all $\tau_1,  \tau_2\in \mathbb{H}$, $E_{\tau_1\to\tau_2}=(E_{\tau_1}\to E_{\tau_2})$.
		The object $E_\tau$ is called the \emph{semantics of $\tau$}.
		For example, if the semantics of the boolean type $o$ is chosen to be $E_o:=\{\lfalse, \ltrue\}$ with the standard truth ordering, then the semantics for type $o\to o$ is the poset of functions from $E_o$ to $E_o$. 
		
		In many applications of AFT, we are ultimately interested in the space of interpretations, which associate to each symbol of a vocabulary, an element of the semantics of the type of such symbol. It follows that an interpretation can be seen as a tuple of elements of different semantics. In more detail, given a vocabulary $V$, we can consider the product type $\tau=\Pi_{s\in V} t(s)$, where $t(s)$ is the type of the symbol $s$. Then, the space of interpretations for the vocabulary $V$ coincides with the semantics $E_\tau=\Pi_{s\in V} E_{t(s)}$. 
		
		We have so far defined the semantics of all the base types and the composite ones constructed from them. 
		Notice that, it is often not necessary to define the spaces of approximations for all such semantics $E_\tau$, which are infinitely many.
		Because of the nature of our formalism, we can easily restrict the set of types we take into account: we can fix a subset $\mathbb{T}\subseteq \mathbb{H}$ of types, and focus our attention onto the set $S_\mathbb{T}$ defined as follows:
		\begin{itemize}
			\item for all $\tau \in \mathbb{T}$, $E_\tau\in S_\mathbb{T}$,
			\item if $E_{\tau_1\to\tau_2}\in S_\mathbb{T}$, then $E_{\tau_2}\in S_\mathbb{T}$, 
			\item if $E_{\Pi_{i\in I}\tau_i} \in S_\mathbb{T} $, then $E_{\tau_i}\in S_\mathbb{T}$ for all $i\in I$.
		\end{itemize} 
		We will dive deeper into this matter in Section  \ref{sec:iclp18} 
		 where we present applications of our framework. We denote by $\mathcal{B}_\mathbb{T}$ 
		the set of base types of $\mathbb{H}$ belonging to $\mathbb{T}$. 

		The notion of \emph{approximation system} (Definition \ref{def:approximation_system}) together with what follows in this section, provide a general framework in which the techniques of AFT can be applied on higher-order languages. Before stating the, rather lengthy, definition of an approximation system, we provide an intuitive understanding of its components.
		
		For each $\subscript{E}{\famtypes}\in S_\mathbb{T}$, we shall consider a corresponding space $\mathit{App}(\subscript{E}{\famtypes})$, called an \emph{approximation space}, whose elements approximate the elements of $\subscript{E}{\famtypes}$. Hence, we define a Cartesian closed full subcategory of $\CPO$, denoted by $\Approx$, and a map $\mathit{App}\colon S_\mathbb{T}\to\Obj{\Approx}$ encoding such correspondence. The fact that $\Approx\subseteq\CPO$ allows us to apply the Knaster-Tarski theorem on the approximation spaces, and guarantees the existence of generalized products (Proposition \ref{prop:full_subcat_poset_genproducts}).  
		Notice that, even though we fixed a mapping $\mathit{App}$ between the set $S_\mathbb{T}$ and the objects of $\Approx$, there is, so far, no relation between the elements of $\subscript{E}{\famtypes}$ and those of $\mathit{App}(\subscript{E}{\famtypes})$.  
		The approximation space $\mathit{App}(\subscript{E}{\famtypes})$ is meant to approximate the elements of $\subscript{E}{\famtypes}$. In particular, we want the order $\leq_{\mathit{App}(\subscript{E}{\famtypes})}$ on $\mathit{App}(\subscript{E}{\famtypes})$, which we call a \emph{precision order}, to encode the approximating nature of $\mathit{App}(\subscript{E}{\famtypes})$ for $\subscript{E}{\famtypes}$: intuitively,  $a\leq_{\mathit{App}(\subscript{E}{\famtypes})} b$ if $a$ is \emph{less precise} than $b$, \ie if an element $e\in \subscript{E}{\famtypes}$ is approximated by $b$, then $e$ is also approximated by $a$. In the context of AFT, of particular interest are the elements of $\mathit{App}(\subscript{E}{\famtypes})$ which approximate just one element, called the \emph{exact} elements. Thus, in the definition of approximation system that we are about to give, for every base type $\tau\in\mathcal{B}_\mathbb{T}$, we fix a set $\mathcal{E}_\tau$ of exact elements of $\mathit{App}(E_\tau)$, and a function $\zeroproj{\tau} \colon \mathcal{E}_\tau\to E_\tau$, which associates each exact element to the unique element of $E_\tau$ it represents. 
		
		To obtain a sensible framework, it is fundamental to carefully define the sets of exact elements and a projection that associates each exact element to the object it represents. Hence, we impose conditions on the possible choices of the sets $\mathcal{E}_\tau$ and the functions  $\zeroproj{\tau}$, for $\tau\in\mathcal{B}_\mathbb{T}$. 
		Since an exact element of $\mathit{App}(E_\tau)$ approximates a single element of the semantics $E_\tau$, if both $a$ and $b$ are exact and one is more precise than the other, then they should represent the same element, i.e.\  $\zeroproj{\tau}(a)=\zeroproj{\tau}(b)$ (Item \ref{item:condition_comparable_exact} in Definition \ref{def:approximation_system}). This requirement also hints at a very important fact: the definition of approximation system allows for the existence of multiple exact elements of $\mathit{App}(E_\tau)$ representing the \emph{same} element of $E_\tau$. Because of this possible multitude of exact representatives, we want to have, for each element $e\in E_\tau$, a natural choice for a representative in the approximation space $\mathit{App}(E_\tau)$. This is why, for each element $e\in E_\tau$, we require that the greatest lower bound of all the exact elements representing $e$ exists, is exact, and represents $e$ (Item \ref{item:condition_glb_exact} in Definition \ref{def:approximation_system}). 
		Lastly, we add one more condition on exact elements to accommodate several existing versions of AFT. In consistent AFT \citep{DMT03Uniformsemantictreatmentdefaultautoepistemiclogics}, exact elements are maximal, while in standard AFT, this is not the case, and there are elements beyond exact ones. We require that either the exact elements are maximal, or we can take arbitrary joins in the approximation spaces (Item~\ref{item:condition_completejoinsemilattice} in Definition \ref{def:approximation_system}). This last condition will later allow for a generalization of both $\mathcal{E}_\tau$ and $\zeroproj{\tau}$ to any type $\tau$ of $\mathbb{H}$, satisfying properties analogous to the ones required for the base types counterparts (Propositions \ref{prop:consistent_aft} and \ref{prop:proj_well-def}).
		
		We are now ready to state the definition of an approximation system. We write~$f^{-1}(b)$  for the \emph{preimage} of an element $b\in B$ via a function $f\colon A\to B$, \ie  $f^{-1}(b)=\{a\mid f(a)=b \}\subseteq A$. Recall that given two posets $\mathcal{P}_1, \mathcal{P}_2\in\Obj{\poset}$, we denote by $(\mathcal{P}_1\to\mathcal{P}_2)\in\Obj{\poset}$ the poset of (possibly non-monotone) functions from $\mathcal{P}_1$ to $\mathcal{P}_2$ ordered with the pointwise order.
		
		\begin{definition}\label{def:approximation_system}
			A tuple $(\Approx, \mathit{App}, \{\mathcal{E}_\tau\}_{\tau \in\mathcal{B}}, \{\zeroproj{\tau}\}_{\tau\in \mathcal{B}})$ is an  \emph{approximation system (for $S_\mathbb{T}$)} if
			\begin{enumerate}
				\item $\Approx$ is a Cartesian closed full subcategory of $\CPO$, called the \emph{approximation category}. The objects of $\Approx$ are called \emph{approximation spaces}. 
				\item $\mathit{App}\colon S_\mathbb{T}\to\Obj{\Approx}$ is a function such that for all $E_\tau \in S_\mathbb{T}$
				\begin{enumerate}
					\item  if  $\tau=\Pi_{i\in I}\tau_i$ is a product type, then $\mathit{App}(E_\tau)=\Pi_{i\in I}\mathit{App}(E_i)$,
					\item\label{item:app_function}  if  $\tau=\tau_1\to\tau_2$ and $E_{\tau_1}\notin S_\mathbb{T}$, then $\mathit{App}(E_{\tau_1 \to\tau_2})=(E_{\tau_1}\to \mathit{App}(E_{\tau_2}))$,
					\item\label{item:app_morphism} if $\tau=\tau_1\to\tau_2$ and $E_{\tau_1}\in S_\mathbb{T}$, then $\mathit{App}(E_{\tau_1 \to\tau_2})=\mathit{App}(E_{\tau_2})^{\mathit{App}(E_{\tau_1})}$.
					
				\end{enumerate} 
				\item $\{\mathcal{E}_\tau\}_{\tau \in\mathcal{B}}$ is a family of sets such that the following hold:
				\begin{enumerate}
					\item for each base type $\tau\in \mathcal{B}$, $\mathcal{E}_\tau\subseteq \mathit{App}(E_\tau)$,
					\item\label{item:condition_completejoinsemilattice} either  $\mathit{App}(E_\tau)\in \Obj{\compjoinsl}$ for all $\tau\in\mathcal{B}$, or for all $\tau\in\mathcal{B}$, if $a\in \mathcal{E}_\tau$ and $b\in\mathit{App}(E_\tau)$ such that $a\leq_{\mathit{App}(E_\tau)} b$, then also $b\in\mathcal{E}_\tau$.
				\end{enumerate} 
				\item $\{\zeroproj{\tau}\}_{\tau\in \mathcal{B}}$ is a family of surjective functions such that 
				for each base type $\tau\in \mathcal{B}$: 
				\begin{enumerate}
					\item $\zeroproj{\tau} \colon \mathcal{E}_\tau\to E_\tau$,
					\item\label{item:condition_comparable_exact} for all $a,b\in \mathcal{E}_\tau$,  if $a\leq_{\mathit{App}(E_\tau)} b$, then $\zeroproj{\tau}(a)=\zeroproj{\tau}(b)$,
					\item\label{item:condition_glb_exact} for all $e\in E_\tau$, there exists $\glb((\zeroproj{\tau})^{-1}(e))\in \mathcal{E}_\tau$ and $\zeroproj{\tau}(\glb(\zeroproj{\tau})^{-1}(e))=e$.
				\end{enumerate} 
				
			\end{enumerate}
		\end{definition}
		Notice that, by Proposition \ref{prop:iso_product_functions}, the object $(E_{\tau_1}\to \mathit{App}(E_{\tau_2}))$ in Item \ref{item:app_function} of Definition \ref{def:approximation_system} is indeed an object of the approximation category $\Approx$.
		Morover, again by Proposition \ref{prop:iso_product_functions}, it holds that $E_{\tau_1\to\tau_2}=(E_{\tau_1}\to E_{\tau_2})\cong \Pi_{i\in E_{\tau_1}}E_{\tau_2}=E_{\Pi_{i\in E_{\tau_1}}\tau_2}$. However, in Item \ref{item:app_morphism} of the above definition, we have $\mathit{App}(E_{\tau_1 \to\tau_2})=\mathit{App}(E_{\tau_2})^{\mathit{App}(E_{\tau_1})}\not\cong \Pi_{i\in E_{\tau_1}}\mathit{App}(E_{\tau_2})=\mathit{App}(E_{\Pi_{i\in E_{\tau_1}}\tau_2})$. Hence,
		while the map $\mathit{App}$, in a way, respects the structure given by the type hierarchy $\mathbb{H}$, it does not commute with isomorphisms of posets.
		
		Finally, it is important to notice that, while the approximation system depends on the application at hand, \ie on the language, the semantics, and so on, the approximation category depends only on the version of AFT. 
		
		We now fix an approximation system $\mathcal{S}=(\Approx, \mathit{App}, \{\mathcal{E}_\tau\}_{\tau \in\mathcal{B}}, \{\zeroproj{\tau}\}_{\tau\in \mathcal{B}})$ for $S_\mathbb{T}$, and extend the notion of exactness to all approximation spaces.
		
		\begin{definition}\label{def:exact}
			Let $\subscript{E}{\famtypes}\in S_\mathbb{T}$. An element $e\in \mathit{App}(\subscript{E}{\famtypes})$ is \emph{exact} if one of the following conditions holds:
			\begin{enumerate}
				\item $\tau\in\mathcal{B}_\mathbb{T}$ and $e \in \mathcal{E}_\tau$,
				\item $\famtypes= \Pi_{i\in I}\tau_i$ and for each $i\in I$, the $i$-th component $\pi_i(e)$ of $e$ is exact,
				\item $\tau=\tau_1\to\tau_2$,  $E_{\tau_1}\notin S_\mathbb{T}$, and 
				for all $e_1\in E_{\tau_1}$, $e(e_1)\in \mathit{App}(E_{\tau_2})$ is exact.
				\item\label{item:exact_morphisms} $\tau=\tau_1\to\tau_2$,  $E_{\tau_1}\in S_\mathbb{T}$, and for all $e_1\in \mathit{App}(E_{\tau_1})$ exact, $e(e_1)\in \mathit{App}(E_{\tau_2})$ is exact.
			\end{enumerate}
		\end{definition}		
		
	The reader may wonder why  there are two cases for a morphism type $\tau_1\to\tau_2$ in Definition \ref{def:exact}, depending whether $E_{\tau_1}$ is in $S_\mathbb{T}$ or not, i.e.\ whether we approximate the elements of the semantics of $\famtypes_1$ or not. Recall that an exact element in the approximation space is meant to represent one and only one element of the semantics of the same type. Intuitively, for a morphism type $\tau_1\to\tau_2$, a function in $\mathit{App}(E_{\tau_1\to\tau_2})$ is exact when the image of any exact element is exact. This is indeed sufficient and we do not need to consider the image of non-exact elements of the domain: we will prove in Proposition \ref{prop:proj_well-def} that there is an exact element in the approximation space for each element of the semantics of the corresponding type. Now, considering \emph{exact elements of the domain} (and looking at their image) makes sense only when the domain is an approximation space, i.e.\ when $E_{\tau_1}\in S_\mathbb{T}$; in the other case, when $E_{\tau_1}\notin S_\mathbb{T}$ we can directly consider the elements of the semantics, as they do not get approximated. 


	For $\tau \in \mathbb{T}$, we denote by $\mathcal{E}_\tau$ the subset of $\mathit{App}(E_\tau)$ of exact elements of type $\tau$.  
		The following proposition shows that the condition \ref{item:condition_completejoinsemilattice} of Definition \ref{def:approximation_system} holds for any $\subscript{E}{\famtypes}\in S_\mathbb{T}$.
		
	
		\begin{proposition}\label{prop:consistent_aft}
		Either for all $\subscript{E}{\famtypes}\in S_\mathbb{T}$ it holds that $\mathit{App}(E_\famtypes)\in \Obj{\compjoinsl}$, or  for all $\subscript{E}{\famtypes}\in S_\mathbb{T}$, for all $b\in\mathit{App}(\subscript{E}{\famtypes})$, and for all $e\in\subscript{\mathcal{E}}{\famtypes}$, if $e\leq_{\mathit{App}(\subscript{E}{\famtypes})} b$ , then $b\in\subscript{\mathcal{E}}{\famtypes}$.
	\end{proposition}
	
		\begin{proof}
			
		 Suppose there exists $\subscript{E}{\famtypes'}\in S_\mathbb{T}$ such that $\mathit{App}(E_{\famtypes'})\notin \Obj{\compjoinsl}$. We have to show that for all $\subscript{E}{\famtypes}\in S_\mathbb{T}$, for all $b\in\mathit{App}(\subscript{E}{\famtypes})$, and for all $e\in\subscript{\mathcal{E}}{\famtypes}$, if $e\leq_{\mathit{App}(\subscript{E}{\famtypes})} b$ , then $b\in\subscript{\mathcal{E}}{\famtypes}$.  We proceed by induction on $\famtypes$.
		 
		 Let $\tau$ be a base type. Since we assumed that $\mathit{App}(E_{\famtypes'})\notin \Obj{\compjoinsl}$ for some $\subscript{E}{\famtypes'}\in S_\mathbb{T}$,
		  and $\compjoinsl$ is Cartesian closed, and has generalized products by Proposition \ref{prop:full_subcat_poset_genproducts}, then there must exists a $\sigma \in \mathcal{B}$ such that $\mathit{App}(E_\sigma)\notin \Obj{\compjoinsl}$. Thus, by condition \ref{item:condition_completejoinsemilattice} of Definition \ref{def:approximation_system} we can conclude the base step of the induction.

		Now let $\famtypes=\Pi_{i\in I}\tau_i$ and suppose the proposition hold for $E_{\tau_i}$ for all $i\in I$.  Let $(b_i)\in \mathit{App}(\subscript{E}{\famtypes})$ such that $e:=(e_i)\leq_{\mathit{App}(\subscript{E}{\famtypes})} (b_i)$. By the definition of the product order, Definition \ref{def:exact}, and the induction hypothesis, we get that $b_i\in\mathcal{E}_{\tau_i}$ for all $i\in I$, \ie $(b_i)\in \subscript{\mathcal{E}}{\famtypes}$, as desired. 
		
		Let $\famtypes=\tau_1\to \tau_2$ with $E_{\tau_1}\notin S_\mathbb{T}$, and suppose the proposition hold for $E_{\tau_2}$. By Proposition \ref{prop:iso_product_functions}, it holds that $\mathit{App}(E_\tau)\cong\mathit{App}(\Pi_{i\in E_{\tau_1}} E_{\tau_2})$, thus, we can reduce to the previous case.
		
		Let $\famtypes=\tau_1\to \tau_2$ with $E_{\tau_1}\in S_\mathbb{T}$, and suppose the proposition hold for $E_{\tau_1}$ and $E_{\tau_2}$. Let $f\in \mathit{App}(\subscript{E}{\famtypes})$ such that $e\leq_{\mathit{App}(\subscript{E}{\famtypes})} f$. For $f$ to be exact, it must send exact elements to exact elements. Let $a\in\mathcal{E}_{\tau_1}$. By the definition of the order on morphisms, and Defintion \ref{def:exact}, it holds that $f(a)\geq_{\mathit{App}(E_{\tau_2})} e(a)\in \mathcal{E}_{\tau_2}$. By induction hypothesis, it follows that $f(a)\in \mathcal{E}_{\tau_2}$. Hence, $f\in\subscript{\mathcal{E}}{\famtypes}$, as desired.
	\end{proof}

		Now that we have defined the exact elements 
		for any semantics in $S_\mathbb{T}$, we extend the family  $\{\zeroproj{\tau}\}_{\tau\in \mathcal{B}}$ to have a map for each $\subscript{E}{\famtypes}\in S_\mathbb{T}$.
		We can do this inductively, by defining a new family of functions $\{\mathfrak{p}_\famtypes\colon \subscript{\mathcal{E}}{\famtypes} \to \subscript{E}{\famtypes} \}_{\subscript{E}{\famtypes}\in S_\mathbb{T}}$ as follows:
		
		\begin{enumerate}
			\item if $\tau\in\mathcal{B}$, then $\mathfrak{p}_\famtypes:=\zeroproj{\tau}$,
			\item if $\famtypes=\Pi_{i\in I}\tau_i$, then for all $(e_i)_{i\in I}\in\subscript{\mathcal{E}}{\famtypes}$,  $\mathfrak{p}_\famtypes ((e_i)_{i\in I}):= (\mathfrak{p}_{\tau_i}(e_i))_{i\in I}$,
			\item if $\famtypes=\tau_1\to\tau_2$, and $E_{\tau_1}\notin S_\mathbb{T}$, then for all $f\in \subscript{\mathcal{E}}{\famtypes}$, and for all $e\in E_{\tau_1}$, $\mathfrak{p}_\famtypes(f)(e):=\mathfrak{p}_{\tau_2}(f(e))$.
			\item if $\famtypes=\tau_1\to\tau_2$, and $E_{\tau_1}\in S_\mathbb{T}$, then for all $f\in \subscript{\mathcal{E}}{\famtypes}$, and for all $e\in E_{\tau_1}$, $\mathfrak{p}_\famtypes(f)(e):=\mathfrak{p}_{\tau_2}(f(d))$, where $d\in\mathfrak{p}_{\tau_1}^{-1}(e)$, \ie $\mathfrak{p}_{\tau_1}(d)=e$.
		\end{enumerate}
		
		Recall that, intuitively, the function $\mathfrak{p}_\famtypes$ sends an exact element of type $\famtypes$ to the element it represents in the semantics of $\famtypes$.
		
		In the following proposition, we prove that for each $\subscript{E}{\famtypes}\in S_\mathbb{T}$, the function $\mathfrak{p}_\famtypes$ is well-defined, surjective, and satisfies properties analogous to \ref{item:condition_comparable_exact} and \ref{item:condition_glb_exact} of Definition \ref{def:approximation_system}.

						\begin{proposition}\label{prop:proj_well-def}
								Let $\subscript{E}{\famtypes}\in S_\mathbb{T}$, $e_1, e_2\in \subscript{\mathcal{E}}{\famtypes}$, and $e\in \subscript{E}{\famtypes}$. The following statements hold:
								
								\begin{enumerate}
										\item\label{item:well-def} $\mathfrak{p}_\famtypes$ is well-defined.
										\item\label{item:surjective} $\mathfrak{p}_\famtypes$ is surjective.
										\item\label{item:comparable_exact} if  $e_1\leq_{\mathit{App}(\subscript{E}{\famtypes})} e_2$, then $\mathfrak{p}_\famtypes(e_1)=\mathfrak{p}_\famtypes(e_2)$.
										\item\label{item:glb_of_exact} there exists $\glb\mathfrak{p}_\famtypes^{-1}(e)\in\subscript{\mathcal{E}}{\famtypes}$ and $\mathfrak{p}_\famtypes(\glb\mathfrak{p}_\famtypes^{-1}(e))=e$. 
									\end{enumerate}
							\end{proposition}

		\begin{proof}
		We proceed by induction on $\famtypes$.
		Let  $\tau\in \mathcal{B}_\mathbb{T}$. Then $\mathfrak{p}_\famtypes=\zeroproj{\tau}$ and the Items \ref{item:well-def}, \ref{item:surjective}, \ref{item:comparable_exact}, and \ref{item:glb_of_exact} hold by definition of $\zeroproj{\tau}$.
		
		Now suppose $\famtypes=\Pi_{i\in I}\tau_i$, and assume that Items \ref{item:well-def}, \ref{item:surjective}, \ref{item:comparable_exact}, and  \ref{item:glb_of_exact} hold for $\tau_i$ for all $i \in I$. Since $\mathfrak{p}_{\tau_i}$ is well-defined and surjective by hypothesis for all $i \in I$, it is clear by definition that also $\mathfrak{p}_\famtypes$ is well-defined and surjective. Let $(a_i), (b_i)\in \subscript{\mathcal{E}}{\famtypes}$ such that $(a_i)\leq_{\mathit{App}(\subscript{E}{\famtypes})} (b_i)$. By the product order on $\mathit{App}(\subscript{E}{\famtypes})$ we have $a_i \leq_{\mathit{App}(E_{\tau_i})} b_i$ for all $i\in I$. By definition \ref{def:exact}, $a_i, b_i \in\mathcal{E}_{\tau_i}$ for all $i\in I$.  Hence, by hypothesis it follows that $\mathfrak{p}_{\tau_i}(a_i)=\mathfrak{p}_{\tau_i}(b_i)$ for all $i\in I$. By definition of $\mathfrak{p}_\famtypes$, we get that $\mathfrak{p}_\famtypes((a_i))=\mathfrak{p}_\famtypes((b_i))$. Thus, Item  \ref{item:comparable_exact} hold. 
		Now let $(a_i)\in\ \subscript{E}{\famtypes}$. 
		By the definition of $\mathfrak{p}_\famtypes$, it is easy to see that $\mathfrak{p}_\famtypes^{-1}((a_i))=\Pi_{i\in I} \mathfrak{p}_{\tau_i}^{-1}(a_i)$. Hence, by induction hypothesis for Item \ref{item:glb_of_exact}, we have  that $\glb(\mathfrak{p}_\famtypes^{-1}((a_i)))=\glb(\Pi_{i\in I}
		\mathfrak{p}_{\tau_i}^{-1}(a_i))=\Pi_{i\in I} \glb\mathfrak{p}_{\tau_i}^{-1}(a_i)\in \Pi_{i\in I} \mathfrak{p}_{\tau_i}^{-1}(a_i)=\mathfrak{p}_\famtypes^{-1}((a_i))$, where the second equality holds because of the definition of the product order on $\mathit{App}(\subscript{E}{\famtypes})$. Thus, also Item  \ref{item:glb_of_exact} hold for $\mathfrak{p}_\famtypes$.
		
		
		Suppose $\famtypes=\tau_1\to \tau_2$ with $E_{\tau_1}\notin S_\mathbb{T}$, and assume that Items \ref{item:well-def}, \ref{item:surjective}, \ref{item:comparable_exact}, and \ref{item:glb_of_exact} hold for and $\tau_2$. By Proposition \ref{prop:iso_product_functions} and Definition \ref{def:approximation_system}, we have $E_{\tau_1\to \tau_2}=(E_{\tau_1}\to E_{\tau_2})\cong \Pi_{i\in E_{\tau_1}} E_{\tau_2}$ and $\mathit{App}(E_{\tau_1\to \tau_2})\cong\mathit{App}(\Pi_{i\in E_{\tau_1}} E_{\tau_2})$. It is easy to see that we can reduce to the previous case with $I:=E_{\tau_1}$.
		
		Finally, suppose $\famtypes=\tau_1\to \tau_2$ with $E_{\tau_1}\in S_\mathbb{T}$, and assume that Items \ref{item:well-def}, \ref{item:surjective}, \ref{item:comparable_exact}, and \ref{item:glb_of_exact} hold for $\tau_1$ and $\tau_2$. We first show that $\mathfrak{p}_\famtypes$ is well-defined, \ie for all $f\in \subscript{\mathcal{E}}{\famtypes}$ there exists unique $g\in \subscript{E}{\famtypes}=E_{\tau_2}^{E_{\tau_1}}$ such that $\mathfrak{p}_\famtypes(f)=g$. Let $f\in \subscript{\mathcal{E}}{\famtypes}$. First notice that, since $\mathfrak{p}_{\tau_1}$ is surjective by hypothesis, for all $e\in E_{\tau_1}$ there exists $d\in\mathcal{E}_{\tau_1}$ such that $\mathfrak{p}_{\tau_1}(d)=e$. Moreover, since $\mathfrak{p}_{\tau_2}$ is well-defined by hypothesis, for all $e\in E_{\tau_1}$ and $d\in \mathfrak{p}_{\tau_1}^{-1}(e)$, we get an element $\mathfrak{p}_{\tau_2}(f(d))\in E_{\tau_2}$. It remains to show that for all $e\in E_{\tau_1}$, if $d_1,d_2\in  \mathfrak{p}_{\tau_1}^{-1}(e)\subseteq \mathcal{E}_{\tau_1}$, then $\mathfrak{p}_{\tau_2}(f(d_1))=\mathfrak{p}_{\tau_2}(f(d_2))$. Let $e\in E_{\tau_1}$ and $d_1,d_2\in  \mathfrak{p}_{\tau_1}^{-1}(e)$. By induction hypothesis for Item \ref{item:glb_of_exact}, there exists $d_3\in \mathfrak{p}_{\tau_1}^{-1}(e)$ such that $d_3\leq_{\mathit{App}(E_{\tau_1})} d_1$, $d_3\leq_{\mathit{App}(E_{\tau_1})} d_2$. Since $f\in \subscript{\mathcal{E}}{\famtypes}=\mathcal{E}_{\tau_2}^{\mathcal{E}_{\tau_1}}$ is a morphism of cpo's, it is monotone. Hence, it holds that $f(d_3)\leq_{\mathit{App}(E_{\tau_2})} f(d_1)$, $f(d_3)\leq_{\mathit{App}(E_{\tau_2})} f(d_2)$. By induction hypothesis for Item \ref{item:comparable_exact}, it follows that $\mathfrak{p}_{\tau_2}(f(d_1))=\mathfrak{p}_{\tau_2}(f(d_2))$. Thus, $\mathfrak{p}_\famtypes$ is well-defined. 
		
		We now show that $\mathfrak{p}_\famtypes$ is surjective. Let $g\in \subscript{E}{\famtypes}=E_{\tau_2}^{E_{\tau_1}}$. By the induction hypothesis on  $\tau_2$ for Item \ref{item:glb_of_exact}, for each $e\in E_{\tau_1}$, we can define an element $d_e:=\glb\mathfrak{p}_{\tau_2}^{-1}(g(e))\in \mathfrak{p}_{\tau_2}^{-1}(g(e))$. By Proposition \ref{prop:consistent_aft}, for each  $a\in App(E_{\famtypes_1})\setminus\mathcal{E}_{\tau_1}$ such that there exists (at least one) $b\in \mathcal{E}_{\tau_1}$ with $b\leq_{\mathit{App}(E_{\tau_1})} a $, we can define
		\begin{equation*}
			c_a:=\lub\{d_{\mathfrak{p}_{\tau_1}(b)} \mid b\in \mathcal{E}_{\tau_1} \text{ such that }b\leq_{\mathit{App}(E_{\tau_1})} a   \}.
		\end{equation*}
		We define $f\colon  \mathit{App}(E_{\tau_1}) \to \mathit{App}(E_{\tau_2})$ for all $a\in \mathit{App}(E_{\tau_1})$ as follows:
		\begin{equation}\label{eq:preimage_construction}
			f(a):=
			\begin{cases}
				d_{\mathfrak{p}_{\tau_1}(a)} & \text{ if } a\in\mathcal{E}_{\tau_1},\\
				c_a & \text{ if } a\notin\mathcal{E}_{\tau_1} \text{ and exists }b\in\mathcal{E}_{\tau_1} \text{ such that } b\leq_{\mathit{App}(E_{\tau_1})} a,\\
				\bot_{\mathit{App}(E_{\tau_2})} & \text{ otherwise.} 
			\end{cases}
		\end{equation}
		In the following we show that $f$ is monotone. Let $a_1, a_2\in \mathit{App}(E_{\tau_1})$ such that $a_1\leq_{\mathit{App}(E_{\tau_1})} a_2$. If $a_1 \notin\mathcal{E}_{\tau_1}$ and for all $ b\in\mathcal{E}_{\tau_1}$ is not the case that  $b\leq_{\mathit{App}(E_{\tau_1})} a $, then clearly $f(a_1)=	\bot_{\mathit{App}(E_{\tau_2})} \leq_{\mathit{App}(E_{\tau_2})} f(a_2)$. If $a_1, a_2\in \mathcal{E}_{\tau_1}$, then $f(a_1)=f(a_2)$ by the induction hypothesis for Item \ref{item:comparable_exact}. If $a_1$ is exact but $a_2$ is not, then clearly $f(a_1) \leq_{\mathit{App}(E_{\tau_2})} f(a_2)$. If both $a_1, a_2\notin\mathcal{E}_{\tau_1}$ and they are greater than some exact $b\in\mathcal{E}_{\tau_1}$, then $\{ b\in \mathcal{E}_{\tau_1} \mid b\leq_{\mathit{App}(E_{\tau_1})} a_1   \}\subseteq \{ b\in \mathcal{E}_{\tau_1} \mid b\leq_{\mathit{App}(E_{\tau_1})} a_2   \}$. Hence, $f(a_1)=c_{a_1} \leq_{\mathit{App}(E_{\tau_2})} c_{a_2}= f(a_2)$, as desired. 
		It follows that $f\in\mathit{App}(\subscript{E}{\famtypes})$. Moreover, it is clear that $f$ sends exact elements to exact elements, \ie $f\in\subscript{\mathcal{E}}{\famtypes}$. For all $e\in E_{\tau_1}$, $\mathfrak{p}_\famtypes(f)(e)=\mathfrak{p}_{\tau_2}(f(c))=\mathfrak{p}_{\tau_2}(d_{\mathfrak{p}_{\tau_1}(c)})=\mathfrak{p}_{\tau_2}(d_e)=g(e)$, where $c$ is some element in the preimage $ \mathfrak{p}_{\tau_1}^{-1}(e)$. Thus, $\mathfrak{p}_\famtypes(f)=g$, as desired. 
		
		We proceed to show that Item \ref{item:comparable_exact} holds for $\famtypes$. Let $f_1, f_2\in \subscript{\mathcal{E}}{\famtypes}$ such that $f_1\leq_{\mathit{App}(E_\tau)} f_2$, and let $e\in E_{\tau_1}$. We have already shown in Item \ref{item:well-def} for $\famtypes$ that $\mathfrak{p}_\famtypes$ is well defined. In particular, $\mathfrak{p}_\famtypes(g)(e)=\mathfrak{p}_{\tau_2}(g(d_1))=\mathfrak{p}_{\tau_2}(g(d_2))$ for all $g\in\subscript{\mathcal{E}}{\famtypes}$ and $d_1, d_2\in \mathfrak{p}_{\tau_1}^{-1}(e)$. By the definition of the order on morphisms, $f_1(d)\leq_{\mathit{App}(E_{\tau_2})} f_2(d)$ for all $d\in\mathcal{E}_{\tau_1}$. By the induction hypothesis for Item \ref{item:comparable_exact}, it holds that $\mathfrak{p}_\famtypes(f_1)(e)=\mathfrak{p}_{\tau_2}(f_1(d))= \mathfrak{p}_{\tau_2}(f_2(d))=\mathfrak{p}_\famtypes(f_2)(e)$. Hence, $\mathfrak{p}_\famtypes(f_1)=\mathfrak{p}_\famtypes(f_2)$, as desired. 
		
		Finally, we show that Item \ref{item:glb_of_exact} holds for $\famtypes$. Let $g\in \subscript{E}{\famtypes}$.
		We can construct a morphism $f\in  \subscript{\mathcal{E}}{\famtypes}$ using the same technique as in \eqref{eq:preimage_construction}. By the proof of Item \ref{item:surjective}, we already have $\mathfrak{p}_\famtypes(f)=g$. It remains to show that $f=\glb\mathfrak{p}_\famtypes^{-1}(g)$. Let $h\in \mathfrak{p}_\famtypes^{-1}(g)$. First notice that since $\mathfrak{p}_\famtypes(h)=\mathfrak{p}_\famtypes(f)=g$, it holds that for all $e\in E_{\tau_1}$, $\mathfrak{p}_{\tau_2}(f(l))=\mathfrak{p}_{\tau_2}(h(l))=g(e)$ for all $l\in \mathfrak{p}_{\tau_1}^{-1}(e)$. In particular, for all $a\in \mathcal{E}_{\tau_1}$, it holds that $f(a), h(a)\in \mathfrak{p}_{\tau_2}^{-1}(g(\mathfrak{p}_{\tau_1}(a)))$.  Hence, for all $a\in \mathcal{E}_{\tau_1}$, we have that $f(a)=d_{\mathfrak{p}_{\tau_1}(a)}=\glb(\mathfrak{p}_{\tau_2}^{-1}(g(\mathfrak{p}_{\tau_1}(a))))\leq_{\mathit{App}(E_{\tau_2})} h(a)$. Now let $a\notin\mathcal{E}_{\tau_1}$  such that there exists $b\in\mathcal{E}_{\tau_1}$ such that $b\leq_{\mathit{App}(E_{\tau_1})} a$. Since we have already shown that $f(c)=d_{\mathfrak{p}_{\tau_1}(c)}\leq_{\mathit{App}(E_{\tau_2})} h(c)$ for all $c\in \mathcal{E}_{\tau_1}$, it is easy to see that $f(a)=c_a\leq_{\mathit{App}(E_{\tau_2})} h(a)$. For all the other cases of $a\in \mathit{App}(E_{\tau_1})$ it is obvious that $f(a)\leq_{\mathit{App}(E_{\tau_2})} h(a)$. Hence, $f$ is a lower bound of $\mathfrak{p}_\famtypes^{-1}(g)$. Since $f\in \mathfrak{p}_\famtypes^{-1}(g)$, we get $f=\glb\mathfrak{p}_\famtypes^{-1}(g)$, as desired.
	\end{proof}

		In most applications of AFT, for approximation spaces of base types, there exists a unique exact element representing an object of a semantics, and Items \ref{item:comparable_exact} and \ref{item:glb_of_exact} of Proposition \ref{prop:proj_well-def} are trivially verified. However, for higher-order approximation spaces, this is not always the case, as we illustrate in the following example. 
		
		\begin{example}\label{example:two_exact}
			Let $o$ be the Boolean type, with semantics $E_o:=\langle \{\lfalse, \ltrue\}, \leq_t\rangle$, where $\leq_t$ is the standard truth order. In standard AFT, we would define the approximation space for $E_o$ to be the bilattice $\mathit{App}(E_o):=\langle E_o \times E_o, \leqp\rangle$, with $\leqp$ the precision order. Then, the semantics for $o\to o$ is the poset of functions from $E_o$ to $E_o$, and the approximation space for it is the exponential, \ie the set of monotone functions from $\mathit{App}(E_o)$ to itself, ordered pointwise. Clearly, we can set the exact elements of $\mathit{App}(E_o)$ to be $(\lfalse, \lfalse)$ and $(\ltrue, \ltrue)$, and $\mathfrak{p}_o$ to send them to $\lfalse$ and $\ltrue$, respectively. Now consider the following two functions: $f, g\colon \mathit{App}(E_o)\to \mathit{App}(E_o)$ defined by $f(\lfalse,\ltrue)=(\lfalse,\ltrue)$, $g(\lfalse,\ltrue)=f(\lfalse,\lfalse)=g(\lfalse,\lfalse)=f(\ltrue,\ltrue)=g(\ltrue,\ltrue)=(\ltrue,\ltrue)$, and $f(\ltrue,\lfalse)=g(\ltrue,\lfalse)=(\ltrue,\lfalse)$. 
			Clearly, both $f$ and $g$ send exacts to exacts, thus, they are exact. Moreover, even though $f\neq g$, it is easy to see that $\mathfrak{p}_{o\to o}(f)=\mathfrak{p}_{o\to o}(g)=h\colon E_o\to E_o$, where $h(\lfalse)=h(\ltrue)=\ltrue$.
		\end{example}

	We conclude this section with the definition of \emph{consistent} elements.
		\begin{definition}
			Let $\subscript{E}{\famtypes}\in S_\mathbb{T}$. An element $c\in \mathit{App}(\subscript{E}{\famtypes})$ is \emph{consistent} if there exists $e\in \subscript{\mathcal{E}}{\famtypes}$ such that $c\leq_{ \mathit{App}(\subscript{E}{\famtypes})} e$.
		\end{definition}
	
		
		Notice that a function of the family $\{\mathfrak{p}_\famtypes\colon \subscript{\mathcal{E}}{\famtypes} \to \subscript{E}{\famtypes} \}_{\subscript{E}{\famtypes}\in S_\mathbb{T}}$ not only determines which element of the semantics an exact element represents, but it also helps understanding what a consistent element is approximating: if $c\in\mathit{App}(\subscript{E}{\famtypes})$ is consistent and $c\leq_{\mathit{App}(\subscript{E}{\famtypes})} e$ for some exact $e$, then $c$ approximates $\mathfrak{p}_\famtypes(e)$. Clearly, consistent elements may approximate more than one element of a semantics.
		
%
%

		
		\section{An approximation system for standard AFT}\label{sec:standardAFT}

		In this section, we show how our new framework extends the standard AFT setting to higher-order definitions.

		The main building block of an approximation system is the category $\Approx$, containing all the desired approximation spaces. Hence, we start by showing that the approximation spaces used in standard AFT, i.e.\ the square bilattices, form a Cartesian closed category.
		
		
		First, recall that a \emph{square bilattice} is a poset of the form $\langle L\times L, \leqp \rangle$, where $\langle L, \leq_L\rangle$ is a complete lattice and $\leqp$ is the \emph{precision order}, i.e.\ $(x_1, y_1)\leqp (x_2, y_2)$ iff $x_1\leq_L x_2$ and $y_2\leq_L y_1$. If we view these objects from a category-theoretic perspective, we can write $\langle L\times L, \leqp \rangle =\mathcal{L}\times \op{\mathcal{L}}$ where $\mathcal{L}:=\langle L, \leq_L\rangle \in \Obj{\CLat}$. Hence, we can define the category $\BiLat$ of square bilattices 
		as follows:
		
		\begin{align*}
			\Obj{\BiLat}:=&\left\{\mathcal{L}\times\op{\mathcal{L}}\mid\mathcal{L}\in\CLat\right\}
			\\ \Hom{\BiLat}:=&\left\{f\colon\mathcal{L}_1\to\mathcal{L}_2\mid \mathcal{L}_1, \mathcal{L}_2\in\Obj{\BiLat} \wedge f \text{ monotone}\right\}
		\end{align*}
		
		We will denote an element $\mathcal{L}\times\op{\mathcal{L}}$ of $\BiLat$ by $\overline{\mathcal{L}}$. 
		
		\begin{lemma}\label{lemma:BiLat_fullsubcat}
			The category $\BiLat$ is a full subcategory of $\CLat$.
		\end{lemma}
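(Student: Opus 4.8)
The plan is to verify directly the two defining conditions of a full subcategory from Definition \ref{def:full_faithful_embedding_subcat} (equivalently, the characterization $\Obj{\catname{D}}\subseteq\Obj{\catname{C}}$ together with $\Hom{\catname{D}}=\{f\in\Hom{\catname{C}}\mid s(f),t(f)\in\Obj{\catname{D}}\}$ recorded in the main text): namely that every square bilattice is a complete lattice, and that the morphisms of $\BiLat$ are exactly the $\CLat$-morphisms whose source and target are square bilattices.

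First I would establish the object inclusion $\Obj{\BiLat}\subseteq\Obj{\CLat}$, which carries the only real content. Fix $\mathcal{L}=\langle L,\leq_L\rangle\in\Obj{\CLat}$ and consider $\overline{\mathcal{L}}=\mathcal{L}\times\op{\mathcal{L}}$. I would first note that $\op{\mathcal{L}}=\langle L,\geq_L\rangle$ is again a complete lattice: reversing the order interchanges greatest lower bounds and least upper bounds, so every subset of $\op{\mathcal{L}}$ still has both. Then I would invoke the standard fact that the product in $\poset$ of two complete lattices, taken with the product order, is a complete lattice, with meets and joins computed componentwise. The key observation tying this to $\BiLat$ is that the product order on $\mathcal{L}\times\op{\mathcal{L}}$ coincides with the precision order: $(x_1,y_1)\leq(x_2,y_2)$ in the product iff $x_1\leq_L x_2$ and $y_1\leq_{\op{\mathcal{L}}}y_2$, and the latter is just $y_2\leq_L y_1$, which is exactly $(x_1,y_1)\leqp(x_2,y_2)$. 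Hence $\langle L\times L,\leqp\rangle=\overline{\mathcal{L}}$ is a complete lattice, so $\overline{\mathcal{L}}\in\Obj{\CLat}$.

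For the morphism condition, I would simply unfold the definitions. Since $\CLat$ is the full subcategory of $\poset$ on complete lattices, its morphisms between any two objects are precisely all monotone functions between them. Restricting $\Hom{\CLat}$ to those morphisms whose source and target are square bilattices therefore yields exactly the monotone functions between square bilattices, which by definition is $\Hom{\BiLat}$. This gives the second condition essentially for free, and the two conditions together are the definition of $\BiLat\subseteq\CLat$.

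I do not expect a genuine obstacle here: both auxiliary facts — that the dual of a complete lattice is complete, and that a product of complete lattices is complete — are routine. The only point requiring care is making explicit that the precision order is the product order on $\mathcal{L}\times\op{\mathcal{L}}$, so that completeness of the product transfers verbatim to completeness of the bilattice; once this identification is stated, the lemma follows immediately. The purpose of the lemma is then to record that square bilattices sit inside $\CLat$ as a full subcategory, which is what lets the subsequent Cartesian-closedness arguments for $\BiLat$ be carried out within $\CLat$.
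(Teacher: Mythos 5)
Your proposal is correct and follows essentially the same route as the paper: the paper likewise observes that $\op{\mathcal{L}}\in\CLat$, concludes $\mathcal{L}\times\op{\mathcal{L}}\in\CLat$ (citing Cartesian closedness of $\CLat$ where you verify the product's completeness directly), and then appeals to the definition of $\BiLat$ for fullness. Your explicit identification of the precision order with the product order on $\mathcal{L}\times\op{\mathcal{L}}$ is exactly the observation the paper records in the text just before the lemma.
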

		\begin{proof}
			Clearly, if $\mathcal{L}\in\CLat$, then $\op{\mathcal{L}}\in\CLat$. Since $\CLat$ is Cartesian closed, for all $\mathcal{L}\in \CLat$, we have that $\mathcal{L}\times \op{\mathcal{L}}\in \CLat$. We conclude by the definition of $\BiLat$.
		\end{proof}

		By Lemma \ref{lemma:BiLat_fullsubcat}, proving that $\BiLat$ is Cartesian closed reduces to show that the following isomorphisms of complete lattices hold for all $\overline{\mathcal{L}_1},\overline{\mathcal{L}_2}\in \BiLat$: 
		\begin{enumerate}
			\item\label{item:iso_terminal_obj} $\mathcal{T}\cong \overline{\mathcal{T}}$, where $\mathcal{T}$ is the terminal object of $\CLat$,
			\item\label{item:iso_product} $\overline{\mathcal{L}_1}\times \overline{\mathcal{L}_2}\cong \overline{(\mathcal{L}_1\times\mathcal{L}_2)}$,
			\item\label{item:iso_exponential} $\overline{\mathcal{L}_2}^{\overline{\mathcal{L}_1}}\cong \overline{\mathcal{L}_2^{\overline{\mathcal{L}_1}}}$. 
		\end{enumerate} 
		
		While the first two isomorphisms are rather straightforward, the latter deserves some attention. Consider a morphism of square bilattices $f$ from $\overline{\mathcal{L}_1}$ to $\overline{\mathcal{L}_2}$. Since $\overline{\mathcal{L}_2}=\mathcal{L}_2\times \op{\mathcal{L}_2}$, we can write $f$ as a pair $(f_1,f_2)$ of morphisms of complete lattices, where $f_1\colon \overline{\mathcal{L}_1}\to \mathcal{L}_2$ and $f_2\colon \overline{\mathcal{L}_1}\to \op{\mathcal{L}_2}$. It follows easily that $\overline{\mathcal{L}_2}^{\overline{\mathcal{L}_1}}\cong \mathcal{L}_2^{\overline{\mathcal{L}_1}}\times (\op{\mathcal{L}_2})^{\overline{\mathcal{L}_1}}$. Then, the isomoprhism $\varphi\colon \overline{\mathcal{L}_2}^{\overline{\mathcal{L}_1}}\to \overline{\mathcal{L}_2^{\overline{\mathcal{L}_1}}}$, is realised by mapping  $f\hat{=}(f_1,f_2)$ to a new pair $\varphi(f):=(f_1,f_2')\in \mathcal{L}_2^{\overline{\mathcal{L}_1}}\times \op{(\mathcal{L}_2^{\overline{\mathcal{L}_1}})}=\overline{\mathcal{L}_2^{\overline{\mathcal{L}_1}}}$, where the second component is defined by $f_2'(x,y):=f_2(y,x)$. Notice that, since $f_2$ is a monotone function from $\overline{\mathcal{L}_1}$ to $\op{\mathcal{L}_2}$, $f_2'$  is indeed a monotone function from $\overline{\mathcal{L}_1}$ to $\mathcal{L}_2$. Further details regarding the isomorphisms listed above are contained in the proof of Theorem \ref{thm:Approx_cartclosed}.

		\begin{theorem}\label{thm:Approx_cartclosed}
			The category $\BiLat$ is Cartesian closed.
		\end{theorem}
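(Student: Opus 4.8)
The plan is to leverage Lemma \ref{lemma:BiLat_fullsubcat}: since $\BiLat$ is a \emph{full} subcategory of the Cartesian closed category $\CLat$, every universal construction carried out in $\CLat$ automatically witnesses the same universal property inside $\BiLat$, provided the resulting object is (isomorphic to) an object of $\BiLat$. Fullness is exactly what makes this descent work: for $\overline{\mathcal{L}_1},\overline{\mathcal{L}_2}\in\BiLat$ the hom-sets computed in $\BiLat$ and in $\CLat$ coincide, so the projections, the evaluation, and every mediating morphism produced by a universal property in $\CLat$ have source and target in $\BiLat$ and hence lie in $\BiLat$, while the uniqueness clauses transfer verbatim. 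Consequently the theorem reduces to establishing the three isomorphisms of complete lattices displayed before the statement, namely $\mathcal{T}\cong\overline{\mathcal{T}}$, $\overline{\mathcal{L}_1}\times\overline{\mathcal{L}_2}\cong\overline{(\mathcal{L}_1\times\mathcal{L}_2)}$, and $\overline{\mathcal{L}_2}^{\overline{\mathcal{L}_1}}\cong\overline{\mathcal{L}_2^{\overline{\mathcal{L}_1}}}$.

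The first two I would dispatch immediately. The terminal object $\mathcal{T}$ of $\CLat$ is the one-element lattice, so $\overline{\mathcal{T}}=\mathcal{T}\times\op{\mathcal{T}}$ is again a one-element lattice and $\mathcal{T}\cong\overline{\mathcal{T}}$; thus $\overline{\mathcal{T}}\in\Obj{\BiLat}$ is terminal in $\BiLat$. For the product I would use that taking opposites commutes with products, $\op{(\mathcal{L}_1\times\mathcal{L}_2)}\cong\op{\mathcal{L}_1}\times\op{\mathcal{L}_2}$, so that both $\overline{\mathcal{L}_1}\times\overline{\mathcal{L}_2}$ and $\overline{(\mathcal{L}_1\times\mathcal{L}_2)}$ unfold to the same fourfold product $\mathcal{L}_1\times\op{\mathcal{L}_1}\times\mathcal{L}_2\times\op{\mathcal{L}_2}$ up to a permutation of factors, and a permutation of factors is an order-isomorphism for the product order.

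The real work, and what I expect to be the main obstacle, is the exponential isomorphism. I would construct the map $\varphi\colon\overline{\mathcal{L}_2}^{\overline{\mathcal{L}_1}}\to\overline{\mathcal{L}_2^{\overline{\mathcal{L}_1}}}$ sketched before the statement: decompose a morphism $f\colon\overline{\mathcal{L}_1}\to\overline{\mathcal{L}_2}$ as a pair $(f_1,f_2)$ with $f_1\colon\overline{\mathcal{L}_1}\to\mathcal{L}_2$ and $f_2\colon\overline{\mathcal{L}_1}\to\op{\mathcal{L}_2}$ monotone, and set $\varphi(f):=(f_1,f_2')$ with $f_2'(x,y):=f_2(y,x)$. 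The crux is the coordinate swap $s(x,y)=(y,x)$, which I would first observe is an order-\emph{reversing} involution of $\overline{\mathcal{L}_1}$: from $(x_1,y_1)\leqp(x_2,y_2)$ one reads $x_1\leq x_2$ and $y_2\leq y_1$, whence $(y_1,x_1)\geqp(y_2,x_2)$. Composing this order-reversal with $f_2$, which is monotone into $\op{\mathcal{L}_2}$ and hence order-reversing into $\mathcal{L}_2$, produces a genuinely monotone $f_2'\colon\overline{\mathcal{L}_1}\to\mathcal{L}_2$, so $(f_1,f_2')$ indeed lands in $\overline{\mathcal{L}_2^{\overline{\mathcal{L}_1}}}=\mathcal{L}_2^{\overline{\mathcal{L}_1}}\times\op{(\mathcal{L}_2^{\overline{\mathcal{L}_1}})}$. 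Bijectivity of $\varphi$ is then clear, since the same swap formula applied in reverse gives the inverse (using that $s$ is an involution). For the order I would spell out both sides: $f\leq g$ in the pointwise precision order on $\overline{\mathcal{L}_2}^{\overline{\mathcal{L}_1}}$ means $f_1(z)\leq_{\mathcal{L}_2}g_1(z)$ and $g_2(z)\leq_{\mathcal{L}_2}f_2(z)$ for all $z$, whereas $\varphi(f)\leqp\varphi(g)$ unfolds to $f_1\leqpt g_1$ and $g_2'\leqpt f_2'$ pointwise; because $s$ is a bijection, the quantifier over the swapped arguments ranges over the same set, making the two conditions literally equivalent. Hence $\varphi$ both preserves and reflects the order, so it is an isomorphism in $\CLat$, completing the last case. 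The only genuinely delicate point throughout is keeping the opposite-order bookkeeping straight so that the second components land in $\op{\mathcal{L}_2}$ rather than $\mathcal{L}_2$; once the swap trick is in place, the remaining verifications are routine.
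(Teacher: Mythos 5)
Your proposal is correct and follows essentially the same route as the paper's proof: reduce via Lemma \ref{lemma:BiLat_fullsubcat} and the Cartesian closedness of $\CLat$ to showing the terminal object, product, and exponential land in $\BiLat$, then handle the exponential with exactly the paper's swap map $\delta(x,y)=(y,x)$ (your $s$), sending $f=(f_1,f_2)$ to $(f_1,f_2\circ\delta)$ and checking the order equivalence pointwise. The only differences are cosmetic: you argue the product case abstractly via $\op{(\mathcal{L}_1\times\mathcal{L}_2)}\cong\op{\mathcal{L}_1}\times\op{\mathcal{L}_2}$ and a permutation of factors where the paper writes out the explicit bijection, and your target notation $\overline{\mathcal{L}_2^{\overline{\mathcal{L}_1}}}$ for the exponential is in fact the correct one (the paper's proof contains a typo, writing $\overline{\mathcal{L}_1^{\overline{\mathcal{L}_2}}}$).
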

		\begin{proofwithoutqed}
			Since $\CLat$ is Cartesian closed, and $\BiLat$ is a full-subcategory of $\CLat$ (Lemma \ref{lemma:BiLat_fullsubcat}), it is sufficient to show that the terminal object of $\CLat$ is an object of $\BiLat$ and that for all $\overline{\mathcal{L}_1}, \overline{\mathcal{L}_2}\in\BiLat$, the product $\overline{\mathcal{L}_1}\times\overline{\mathcal{L}_2}$ and the exponential $\overline{\mathcal{L}_2}^{\overline{\mathcal{L}_1}}$, computed in the category $\CLat$, are also objects of $\BiLat$.
			\begin{itemize}
				\item \emph{Terminal object.} There is an obvious isomorphism from the terminal object $\mathcal{T}$ of $\CLat$, i.e.\ the lattice with just one element and trivial order, and the object $\mathcal{T}\times\op{\mathcal{T}}\in\BiLat$.
				\item \emph{Product.} Let $\overline{\mathcal{L}_1}, \overline{\mathcal{L}_2}\in\BiLat$. By Cartesian closedness of $\CLat$, $\overline{\mathcal{L}_1}\times\overline{\mathcal{L}_2}$ is an object of $\BiLat$. We define a function $\varphi\colon \overline{\mathcal{L}_1}\times\overline{\mathcal{L}_2} \to \overline{\mathcal{L}_1\times\mathcal{L}_2}$ by sending an element $((a_1,b_1),(a_2,b_2))$ to $((a_1,a_2),(b_1,b_2))$. Clearly, $\varphi$ is bijective. Moreover, by the definition of the product order, the following double-implications hold for all $a_1,b_1,x_1,y_1\in \mathcal{L}_1$, and for all $b_1,b_2, x_2,y_2\in \mathcal{L}_2$
				\begin{equation*}
					\begin{split}
						((a_1,b_1),(a_2,b_2))&\leq_{\overline{\mathcal{L}_1}\times\overline{\mathcal{L}_2}}  ((x_1,y_1),(x_2,y_2))\\
						&\iff (a_1,b_1)\leq_{\overline{\mathcal{L}_1}} (x_1,y_1) \wedge (a_2,b_2)\leq_{\overline{\mathcal{L}_2}} (x_2,y_2) \\
						& \iff a_1\leq_{\mathcal{L}_1} x_1 \wedge y_1\leq_{\mathcal{L}_1} b_1 \wedge a_2\leq_{\mathcal{L}_2} x_2 \wedge y_2\leq_{\mathcal{L}_2} b_2 \\
						& \iff (a_1,a_2)\leq_{\mathcal{L}_1\times \mathcal{L}_2} (x_1,x_2) \wedge (y_1,y_2)\leq_{\mathcal{L}_1\times\mathcal{L}_2} (b_1,b_2) \\
						&\iff 	((a_1,a_2),(b_1,b_2))\leq_{\overline{\mathcal{L}_1\times\mathcal{L}_2} } ((x_1,x_2),(y_1,y_2)).
					\end{split}
				\end{equation*}
				Hence, $\varphi$ and its inverse are monotone functions, i.e.\ morphisms. It follows that $\overline{\mathcal{L}_1}\times\overline{\mathcal{L}_2} \cong \overline{\mathcal{L}_1\times\mathcal{L}_2}\in\BiLat$, as desired.
				\item \emph{Exponential.} Let $\overline{\mathcal{L}_1}, \overline{\mathcal{L}_2}\in\BiLat$. By Cartesian closedness of $\CLat$, $\overline{\mathcal{L}_2}^{\overline{\mathcal{L}_1}}$ is an object of $\BiLat$. Let $\delta\colon \overline{\mathcal{L}_1}\to\overline{\mathcal{L}_1}$ be the function sending $(x,y)$ to $(y,x)$. We define a function $\psi\colon \overline{\mathcal{L}_2}^{\overline{\mathcal{L}_1}} \to \overline{\mathcal{L}_1^{\overline{\mathcal{L}_2}}}$ by sending a morphism $f:=(f_1,f_2)$ to $(f_1,f_2\circ\delta)$, where $f_1\colon \overline{\mathcal{L}_1}\to \mathcal{L}_2$ and $f_2\colon \overline{\mathcal{L}_1}\to \op{\mathcal{L}_2}$ are the components of $f$. Since $f_2$ is an antimonotone function from $\overline{\mathcal{L}_1}$ to $\mathcal{L}_2$, it is easy to check that $f_2\circ\delta$ is a monotone function from $\overline{\mathcal{L}_1}$ to $\mathcal{L}_2$, as desired. Clearly, $\varphi$ is bijective.  Moreover, by the definition of the pointwise order, the following double-implications hold for all $f_1,g_1\in \mathcal{L}_2^{\overline{\mathcal{L}_1}}$, and for all $f_2,g_2\in (\op{\mathcal{L}_2})^{\overline{\mathcal{L}_1}}$
				\begin{equation*}
					\begin{split}
						(f_1,f_2) &\leq_{\overline{\mathcal{L}_2}^{\overline{\mathcal{L}_1}}}  (g_1,g_2)\\
						&\iff \forall (x,y)\in\overline{\mathcal{L}_1},  (f_1(x,y),f_2(x,y))\leq_{\overline{\mathcal{L}_2}} (g_1(x,y),g_2(x,y))\\
						& \iff \forall (x,y)\in\overline{\mathcal{L}_1},  f_1(x,y)\leq_{\mathcal{L}_2} g_1(x,y) \wedge g_2(x,y)\leq_{\mathcal{L}_2} f_2(x,y)\\
						& \iff \forall (x,y)\in\overline{\mathcal{L}_1},  f_1(x,y)\leq_{\mathcal{L}_2} g_1(x,y) \wedge g_2(y,x)\leq_{\mathcal{L}_2} f_2(y,x)\\
						& \iff f_1\leq_{\mathcal{L}_2^{\overline{\mathcal{L}_1}}} g_1 \wedge g_2\circ\delta\leq_{\mathcal{L}_2^{\overline{\mathcal{L}_1}}} f_2\circ\delta \\
						&\iff 	(f_1, f_2\circ \delta) \leq_{\overline{\mathcal{L}_1^{\overline{\mathcal{L}_2}}}} (g_1, g_2\circ\delta).
					\end{split}
				\end{equation*}
				Hence, $\psi$ and its inverse are monotone functions, i.e.\ morphisms. It follows that $\overline{\mathcal{L}_2}^{\overline{\mathcal{L}_1}} \cong \overline{\mathcal{L}_1^{\overline{\mathcal{L}_2}}}\in\BiLat$, as desired.\qed
			\end{itemize}
		\end{proofwithoutqed}
		
		It is interesting to observe that the approximators used in standard AFT, \ie the \emph{symmetric} approximators, when viewed in their square bilattice approximator space, correspond to pairs of equal functions, \ie the classic definition of exact pair \citep{DMT00ApproximationsStableOperatorsWell-FoundedFixpointsApplications}. Similarly, a \emph{gracefully degrading} approximator $A=(A_1, A_2)\colon \overline{\mathcal{L}}\to\overline{\mathcal{L}}$, \ie such that $A_1(x,y)\leq_{\overline{\mathcal{L}}} A_2(y,x)$ for all $(x,y)\in\overline{\mathcal{L}}$ \citep{DV07Well-FoundedSemanticsAlgebraicTheoryNon-monotoneInductive}, when viewed in $\overline{\mathcal{L}^{\overline{\mathcal{L}}}}$ is a pair $\varphi(A)=(A_1,A'_2)$ with $A_1\leq_{\mathcal{L}^{\overline{\mathcal{L}}}} A'_2$, \ie a consistent pair according to the classic definition of AFT.

		Thanks to Theorem \ref{thm:Approx_cartclosed} and $\BiLat\subseteq\CPO$, we can fix $\BiLat$ as our approximation category. 
		This can be done for \emph{any} application in which we want to use standard AFT techniques. Nevertheless, depending on the application at hand, the approximation system may differ. Let us show how to define an approximation system given a language 
		based on a type hierarchy $\mathbb{H}$
		. Let $S_\mathbb{T}$ be the set of the semantics of types of $\mathbb{H}$ we want to approximate, and assume that such semantics are complete lattices, as is usually the case in logic programming. Then, we can inductively define a mapping $\mathit{App}\colon S_\mathbb{T} \to \Obj{\BiLat}$ by setting, for all $\tau\in\mathcal{B}_\mathbb{T}$, $\mathit{App}(E_\tau):=\overline{E_\tau}$, and proceed using the conditions in Definition \ref{def:approximation_system}.
		Notice that the base case of the induction is nothing more than what is usually done in standard AFT: from a complete lattice $\langle L, \leq_L\rangle$ we obtain the square bilattice $\langle L^2, \leqp\rangle$. The remaining steps are naturally provided by following the Cartesian closed structure of $\BiLat$.
		
		For each base type $\tau\in\mathcal{B}_\mathbb{T}$, the exact elements of $\mathit{App}(E_\tau)$ are defined as in standard AFT: $(x,y)\in \mathit{App}(E_\tau)$ is exact if $x=y$, \ie $\mathcal{E}_\tau=\{(x,x)\mid x\in E_\tau\}$. Notice that, since $\BiLat\subseteq\CLat\subseteq \compjoinsl$, the condition \ref{item:condition_completejoinsemilattice} in Defintion \ref{def:approximation_system} is satisfied. 
		Finally, 
		for each base type $\tau$, we define $\zeroproj{\tau}\colon \mathcal{E}_\tau\to E_\tau$ by sending $(x,x)$ to $x$. Both conditions \ref{item:condition_comparable_exact} and \ref{item:condition_glb_exact} in Definition \ref{def:approximation_system} hold since $(\zeroproj{\tau})^{-1}(x)=\{(x,x)\}$.
		Hence, we have obtained an approximation system $(\BiLat, \mathit{App}, \{\mathcal{E}_\tau\}_{\tau\in\mathcal{B}}, \{\zeroproj{\tau}\}_{\tau\in\mathcal{B}})$ for $S_\mathbb{T}$. 
		
		In standard AFT, we are ultimately interested in the approximation space of interpretations. Given a vocabulary $V$, $S_\mathbb{T}$ can be easily chosen to contain the semantics of the types of the symbols in $V$ and the space of interpretations for $V$, \ie the complete lattice $\Pi_{s\in V'} E_{t(s)}$, where $t(s)$ is the type of the symbol $s$.
		It follows that the approximation space of interpretations is  $\mathit{App}(\Pi_{s\in V'} E_{t(s)})=\Pi_{s\in V'} \mathit{App}(E_{t(s)})\in\BiLat$.
		Clearly, if we restrict to a vocabulary with only symbols of base type, then we retrieve the usual framework of 
		standard AFT.

			\section{Revised Extended Consistent AFT}\label{sec:iclp18}
		

			\citet{CRS18ApproximationFixpointTheoryWell-FoundedSemanticsHigher-Order} developed an extension of consistent AFT \citep{DMT03Uniformsemantictreatmentdefaultautoepistemiclogics} to generalize the well-founded semantics for classical logic programs to one for programs with higher-order predicates. As already pointed out in Section \ref{sec:AFT}, this generalization bears some issues.
			
			In this section, we examine in detail the work of \citet{CRS18ApproximationFixpointTheoryWell-FoundedSemanticsHigher-Order} under the lenses of our novel categorical framework. First, in Subsection \ref{sec:sub:approxcat_greeks}, we present their extension of consistent AFT with their version of approximation spaces, and we prove that this new class of mathematical objects forms a Cartesian closed category. Then, in Subsection \ref{sec:sub:approxsystem_greeks}, we briefly recall the types and semantics used by  \citet{CRS18ApproximationFixpointTheoryWell-FoundedSemanticsHigher-Order}, and we define an approximation system for it. Thanks to the inductive nature of Cartesian closed categories, from the tuple defining the approximation system, we can effortlessly retrieve the entire, complex hierarchy built by \citet{CRS18ApproximationFixpointTheoryWell-FoundedSemanticsHigher-Order}. From the definition of the approximation system, we already obtain a concept of exactness for higher-order objects, which was previously missing in \cite{CRS18ApproximationFixpointTheoryWell-FoundedSemanticsHigher-Order}.  Finally, in Subsection \ref{sec:sub:approximator_greeks}, we present our solution to the problem encountered in the work of \citet{CRS18ApproximationFixpointTheoryWell-FoundedSemanticsHigher-Order} concerning the semantics of logic programs with existential quantifiers. In particular, we propose a new approximator which provides the expected well-founded semantics. We conclude the subsection with two examples of  logic programs in which we need to apply an approximate object on another approximate object.

			\subsection{The Approximation Category for  Extended Consistent AFT}\label{sec:sub:approxcat_greeks}
				In consistent AFT \citep{DMT03Uniformsemantictreatmentdefaultautoepistemiclogics}, an approximation space is  the consistent part of a square bilattice, \ie given a bilattice $\overline{\mathcal{L}}=\langle L\times L, \leqp \rangle $, only the subset $\{(x,y)\mid x\leq_L y\}\subseteq \overline{\mathcal{L}}$ of consistent elements is taken into account.  \cite{CRS18ApproximationFixpointTheoryWell-FoundedSemanticsHigher-Order} extended consistent AFT to a new class of approximation spaces: the sets of the form $L\otimes U:=\{(x,y)\mid x\in L, y\in U, x\leq y\}$, comprising the consistent elements of the cartesian product between a set $L$ of \emph{lower bounds} and a set $U$ of \emph{upper bounds}, where $L$ may differ from $U$. 
			 	
			 	In order for the machinery of consistent AFT to work over these new spaces, \cite{CRS18ApproximationFixpointTheoryWell-FoundedSemanticsHigher-Order} added  some conditions to restrain the possible choices for $L$ and $U$.

			 \begin{definition}\label{def:approximation_tuple}
			 	An \emph{approximation tuple} is a tuple $(L,U\leq)$, where $L$ and $U$ are sets, and $\leq$ is a partial order on $L\cup U$ such that the following conditions  hold:
			 	\begin{enumerate}
			 		\item $\langle L\cup U, \leq \rangle$ has a top element $\top$ and a bottom element $\bot$,
			 		\item $\top, \bot\in L\cap U$,
			 		\item $\langle L, \leq \rangle$ and $\langle U, \leq \rangle$ are complete lattices,
			 		\item \emph{Interlattice Least Upper Bound Property (ILP)}: for all $b\in U$ and for all $S\subseteq L$ such that for all $x\in S$, $x\leq b$, we have $\lub_L S\leq b$,
			 		\item \emph{Interlattice Greatest Lower Bound Property (IGP)}: for all $a\in L$ and for all $S\subseteq U$ such that for all $x\in S, a\leq x$, we have $a\leq \glb_U S$.
			 	\end{enumerate}
			 \end{definition}
			 
			 \begin{definition}\label{def:approximation_space_LU}
			 	Let $(L,U, \leq )$ be an approximation tuple. The \emph{approximation space} (associated to $(L,U,\leq)$) is the poset $\langle L\otimes U, \leqp\rangle$, where
			 	$L\otimes U:=\{(x,y)\mid x\in L, y\in U, x\leq y\}$,
			 	and $\leqp$ is the partial order defined for all $(x_1,y_1),(x_2,y_2)\in L\otimes U$ by:
			 	$(x_1,y_1)\leqp (x_2,y_2)$ iff $x_1\leq x_2$ and $ y_2\leq y_1$.
			 	We call $\leqp$ the \emph{precision order} on $L\otimes U$.
			 \end{definition}
			 
			 In the remainder of this subsection, we prove that the new class of approximation spaces defined in Definition \ref{def:approximation_space_LU} forms a Cartesian closed full subcategory of $\CPO$ (Theorem \ref{thm:LU_approx_category}).  First, we define a new category $\greeks$, with objects the approximation spaces just introduced, as follows:
			 
			 \begin{equation*}
			 	\begin{split}
			 		\Obj{\greeks}:=& \{ \langle L\otimes U, \leqp\rangle\mid (L,U,\leq) \text{ is an approximation tuple} \}\\
			 		\Hom{\greeks}:=& \{ f\colon A\to B\mid A,B\in\Obj{\greeks} \wedge f \text{ monotone}\}.
			 	\end{split}
			 \end{equation*}


			\begin{theorem}\label{thm:LU_approx_category}
				The category $\greeks$ is a Cartesian closed full subcategory of $\CPO$.
			\end{theorem}
			
				We split the proof of Theorem \ref{thm:LU_approx_category} into smaller results: first we show that $\greeks$ is a full subcategory of $\CPO$, then we prove it is Cartesian closed.
			
			\begin{proposition}\label{prop:approx_is_cpo}
				Let $L\otimes U\in \Obj{\greeks}$. Then $L\otimes U$ is a cpo. 
			\end{proposition}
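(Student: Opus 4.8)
The plan is to show that every chain $C \subseteq L\otimes U$ has a least upper bound, obtained by joining the first coordinates in $L$ and meeting the second coordinates in $U$. Writing $C=\{(x_i,y_i)\}_{i\in I}$ and letting $p_1(C)=\{x_i\}_{i\in I}\subseteq L$, $p_2(C)=\{y_i\}_{i\in I}\subseteq U$ denote the coordinate images, I set $a:=\lub_L p_1(C)$ and $b:=\glb_U p_2(C)$; these exist because $\langle L,\leq\rangle$ and $\langle U,\leq\rangle$ are complete lattices. The candidate for $\lub C$ is then $(a,b)$, and the bulk of the argument is to show that $(a,b)$ is a legitimate element of $L\otimes U$ and that it is indeed least among the upper bounds of $C$ with respect to $\leqp$.

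The key step, and the one where the interlattice properties of an approximation tuple are essential, is verifying $(a,b)\in L\otimes U$, i.e.\ that $a\leq b$. First I would observe that, since $C$ is totally ordered under $\leqp$, for \emph{any} $i,j\in I$ one has $x_i\leq y_j$: if $(x_i,y_i)\leqp(x_j,y_j)$ then $x_i\leq x_j\leq y_j$, while if $(x_j,y_j)\leqp(x_i,y_i)$ then $x_i\leq y_i\leq y_j$, using the defining condition $x_k\leq y_k$ of $L\otimes U$ in each case. Fixing $j$, the family $\{x_i\}_{i\in I}$ is thus bounded above by $y_j\in U$, so the ILP yields $a=\lub_L\{x_i\}_{i\in I}\leq y_j$. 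As this holds for every $j\in I$, the element $a\in L$ is a lower bound of $\{y_j\}_{j\in I}\subseteq U$, and the IGP then gives $a\leq\glb_U\{y_j\}_{j\in I}=b$. Hence $(a,b)\in L\otimes U$. I expect this to be the only real obstacle; the remaining verifications are immediate from the definitions.

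Finally I would check that $(a,b)$ is the least upper bound. That it \emph{is} an upper bound follows directly from the definition of $\leqp$: for each $i$ we have $x_i\leq a$ and $b\leq y_i$ by the defining properties of $\lub_L$ and $\glb_U$, so $(x_i,y_i)\leqp(a,b)$. For minimality, any upper bound $(u,v)\in L\otimes U$ of $C$ satisfies $x_i\leq u$ and $v\leq y_i$ for all $i$; thus $u$ bounds $\{x_i\}_{i\in I}$ above and $v$ bounds $\{y_i\}_{i\in I}$ below in $L$ and $U$ respectively, giving $a\leq u$ and $v\leq b$, i.e.\ $(a,b)\leqp(u,v)$. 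The empty chain is covered by the same construction, producing $\lub\emptyset=(\bot,\top)$, which lies in $L\otimes U$ since $\bot,\top\in L\cap U$ and $\bot\leq\top$. This establishes that $L\otimes U$ is a cpo.
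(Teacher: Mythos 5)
Your proof is correct and takes essentially the same route as the paper's: join the first coordinates in $L$, meet the second coordinates in $U$, and use the ILP and IGP (you apply them in the symmetric order, fixing an upper element first, but this is an inessential difference) to show the candidate pair lies in $L\otimes U$. If anything, yours is tidier: you verify leastness explicitly and cover the empty chain, whereas the paper's proof ends with a superfluous attempt to show that the supremum belongs to the chain $S$ itself---a claim that chain-completeness does not require and that is false in general (in $[0,1]\otimes[0,1]$ the chain $\{(t,1)\mid t<1\}$ has supremum $(1,1)$, which it does not contain).
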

			\begin{proof}
				Let $L\otimes U\in \Obj{\greeks}$, and $S\subseteq L\otimes U$ a chain. We denote by $p_1\colon L\otimes U\to L$ the function of sets sending $(x,y)$ to $x$, and by $p_2\colon L\otimes U\to U$ the function sending $(x,y)$ to $y$.  Clearly, $p_1(S)$ and $p_2(S)$ are chains in $\langle L, \leq\rangle$ and $\langle U, \leq \rangle$, respectively. Since $\langle L, \leq\rangle$ and $\langle U, \leq \rangle$ are lattices, there exist $\lub_L p_1(S)=:x\in p_1(S)$ and $\glb_U p_2(S)=:y\in p_2(S)$. We now show that $(x,y)\in L\otimes U$, \ie $x\leq y$. Let $r\in p_1(S)\subseteq L$ and $q\in p_2(S)\subseteq U$. Then, there exist $p\in p_1(S)$ and $s\in p_2(S)$ such that $(r,s),(p,q)\in S$. Since $S$ is a chain, we either have $(r,s)\leqp(p,q)$ or $(p,q)\leqp(r,s)$. In both cases, $s\leq q$. By the arbitrarity of $q$ and the IGP, $s\leq y$. By the arbitrarity of $s$ and the ILP, we have $x\leq y$, as desired. Clearly $(x,y)=\lub_{L\otimes U} S\in L\otimes U$, so it remains to show that $(x,y)\in S$. Since $x\in p_1(S)$ and $y\in p_2(S)$, there exist $x'\in p_1(S)$ and $y'\in p_2(S)$ such that $(x,y'), (x',y)\in S$. By the definitions of $x$ and $y$, we must have $x'\leq x$ and $y'\geq y$. Suppose $x' < x$ and $y' > y$. Then $(x,y')$ and $(x,y')$ cannot be ordered, which condradicts $S$ being a chain. Hence, either $x=x'$ or $y=y'$. In any case, $(x,y)\in S$, as desired. 
			\end{proof}
			
			\begin{corollary}\label{cor:Approx_fullsubcat_cpo}
				$\greeks$ is a full subcategory of $\CPO$. 
			\end{corollary}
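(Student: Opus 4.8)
The plan is to prove the corollary by unwinding directly the definition of a full subcategory (Definition \ref{def:full_faithful_embedding_subcat}): I must verify the object inclusion $\Obj{\greeks}\subseteq\Obj{\CPO}$ and the matching of hom-collections $\Hom{\greeks}=\{f\in\Hom{\CPO}\mid s(f),t(f)\in\Obj{\greeks}\}$. The essential content has already been isolated in Proposition \ref{prop:approx_is_cpo}, so the corollary is mostly a matter of assembling that result with the way the ambient categories are defined.

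For the object inclusion, every object of $\greeks$ is by definition an approximation space $\langle L\otimes U,\leqp\rangle$ arising from some approximation tuple, and Proposition \ref{prop:approx_is_cpo} states precisely that each such poset is a cpo. Since $\CPO$ is the full subcategory of $\poset$ whose objects are exactly the cpo's, this gives $\Obj{\greeks}\subseteq\Obj{\CPO}$ at once. For the morphism condition, I would note that both $\greeks$ and $\CPO$ are, by construction, full subcategories of $\poset$; hence in either category the morphisms between two fixed objects are exactly the monotone maps between the underlying posets. Consequently a morphism of $\CPO$ whose source and target lie in $\Obj{\greeks}$ is nothing other than a monotone map between two approximation spaces, which is by definition an element of $\Hom{\greeks}$, and conversely. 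This yields the required equality of hom-collections and completes the verification that $\greeks\subseteq\CPO$.

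There is no genuine obstacle at this stage: the corollary repackages Proposition \ref{prop:approx_is_cpo}, where the real work lies — namely checking that the consistency constraint $x\leq y$ is preserved when chain joins are computed componentwise, using the ILP and IGP of the underlying approximation tuple. The only point that warrants a moment's care is that fullness is inherited through the common ambient category $\poset$, so that one need not compare two a priori different notions of morphism between cpo's; once this is observed, the two defining conditions of a full subcategory both fall out immediately.
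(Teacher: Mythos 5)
Your argument is correct and matches the paper's proof, which likewise derives the corollary immediately from Proposition \ref{prop:approx_is_cpo} together with the definition of $\Hom{\greeks}$ as the monotone maps between approximation spaces. Your additional observation that fullness is inherited through the common ambient category $\poset$ is exactly the implicit content of the paper's one-line justification, merely spelled out.
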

			\begin{proof}
				Clear from Proposition \ref{prop:approx_is_cpo} and the definition of $\Hom{\greeks}$.
			\end{proof}

			\begin{proposition}
				$\greeks$ is a Cartesian closed category.
			\end{proposition}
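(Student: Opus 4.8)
The plan is to mirror the proof of Theorem~\ref{thm:Approx_cartclosed}. Since $\greeks$ is a full subcategory of $\CPO$ (Corollary~\ref{cor:Approx_fullsubcat_cpo}) and $\CPO$ is Cartesian closed, it is enough to show that the terminal object, the binary products, and the exponentials of $\CPO$, evaluated on objects of $\greeks$, are again (isomorphic to) objects of $\greeks$. Fullness then transfers the universal properties to $\greeks$ for free, as these properties only quantify over monotone maps; and because the products computed in $\greeks$ and in $\CPO$ will be shown to coincide, the universal property of the exponential may be verified using the $\greeks$-product without any clash. The terminal object is immediate: the one-point cpo is the approximation space associated to the approximation tuple with $L=U=\{\ast\}$ and $\top=\bot=\ast$.

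For products, given $L_1\otimes U_1$ and $L_2\otimes U_2$ arising from approximation tuples $(L_1,U_1,\le)$ and $(L_2,U_2,\le)$, I would propose the tuple $(L_1\times L_2,\,U_1\times U_2,\,\le)$ with the componentwise order inherited from $(L_1\cup U_1)\times(L_2\cup U_2)$. Its axioms are checked coordinatewise: the bounds are $(\top_1,\top_2)$ and $(\bot_1,\bot_2)$, both lying in $(L_1\times L_2)\cap(U_1\times U_2)$; the two factors are complete lattices as products of complete lattices; and the ILP and IGP follow from those of the factors, since suprema and infima in a product are taken coordinatewise. The bijection $((x_1,y_1),(x_2,y_2))\mapsto((x_1,x_2),(y_1,y_2))$ then realises $(L_1\otimes U_1)\times(L_2\otimes U_2)\cong(L_1\times L_2)\otimes(U_1\times U_2)$, and a short chain of equivalences identical in spirit to the product case of Theorem~\ref{thm:Approx_cartclosed} shows it preserves the precision order in both directions.

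The exponential is where the real work lies. Write $A=L_1\otimes U_1$ and $B=L_2\otimes U_2$. Every morphism $f\colon A\to B$ splits as $f=(f_1,f_2)$ with $f_1=p_1\circ f$ and $f_2=p_2\circ f$, and monotonicity of $f$ for the precision orders amounts exactly to requiring that $f_1\colon A\to L_2$ be monotone and $f_2\colon A\to U_2$ be antimonotone, while $f(a)\in B$ forces $f_1(a)\le f_2(a)$ for every $a$; these are precisely the monotone--antimonotone pairs already singled out in Section~\ref{sec:iclp18}. Accordingly, I would take $\tilde L$ to be the complete lattice of monotone maps $A\to L_2$ and $\tilde U$ the complete lattice of antimonotone maps $A\to U_2$, both viewed inside the pointwise-ordered set of maps $A\to\langle L_2\cup U_2,\le\rangle$, with this pointwise order as the shared order. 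Then $(\tilde L,\tilde U,\le)$ is an approximation tuple: the constant maps at $\top$ and $\bot$ are the global bounds and lie in $\tilde L\cap\tilde U$; $\tilde L$ and $\tilde U$ are complete lattices of monotone maps into complete lattices; and the ILP and IGP hold because suprema in $\tilde L$ and infima in $\tilde U$ are computed pointwise, so each reduces, one argument at a time, to the ILP and IGP of $(L_2,U_2,\le)$. Finally $f\mapsto(f_1,f_2)$ is a bijection onto $\tilde L\otimes\tilde U=\{(f_1,f_2)\mid f_1\le f_2\text{ pointwise}\}$, and it is an order isomorphism because $f\le g$ in the exponential unravels at each $a$ to $f_1(a)\le g_1(a)$ and $g_2(a)\le f_2(a)$, i.e.\ to $(f_1,f_2)\leqp(g_1,g_2)$. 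Hence $B^{A}\cong\tilde L\otimes\tilde U\in\Obj{\greeks}$.

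I expect the main obstacle to be concentrated entirely in this last item: correctly pinning down $\tilde L$, $\tilde U$ and their common ambient order, and, above all, verifying that the ILP and IGP survive the passage to function spaces. This is exactly where the interlattice conditions of Definition~\ref{def:approximation_tuple} become indispensable, since without them a pointwise supremum of $L_2$-valued lower maps need not remain below a $U_2$-valued upper map, and the candidate tuple would fail to be an approximation tuple.
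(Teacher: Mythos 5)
Your proof is correct, and on the two delicate items it coincides with the paper's own argument: the reduction via fullness (Corollary~\ref{cor:Approx_fullsubcat_cpo}) to showing the terminal object, products, and exponentials of $\CPO$ land in $\greeks$, the one-point approximation tuple for the terminal object, and, for the exponential, exactly the paper's construction with $L_\mathcal{C}$ the monotone maps into $\langle L_2,\leq\rangle$, $U_\mathcal{C}$ the antimonotone maps into $U_2$ (i.e.\ the monotone maps into $\langle U_2,\geq\rangle$), the shared pointwise order, the pointwise verification of ILP/IGP, and the order-isomorphism $f\mapsto(f_1,f_2)$. The only place you genuinely diverge is the product: the paper disposes of it in one line by citing Proposition~\ref{prop:full_subcat_poset_genproducts} (generalized products via limit reflection along the full embedding into $\poset$) together with Corollary~\ref{cor:Approx_fullsubcat_cpo}, whereas you construct the approximation tuple $(L_1\times L_2,\,U_1\times U_2,\,\leq)$ explicitly and verify its axioms coordinatewise. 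Each route buys something: the paper's citation covers products over arbitrary (possibly infinite) index sets in one stroke, which the framework needs elsewhere for interpretation spaces, while your explicit construction is self-contained and yields strictly more information, namely that the product of approximation spaces is itself the approximation space of a concretely identified tuple --- a fact the paper's abstract argument leaves implicit (and your version does generalize verbatim to infinite families, since suprema and infima in a product of complete lattices are still computed coordinatewise). Your parenthetical care that the exponential's universal property be tested against the $\greeks$-product, which coincides with the $\CPO$-product, is also sound and is implicitly what the paper does.
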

			\begin{proofwithoutqed}
				Since $\CPO$ is Cartesian closed, and $\greeks$ is a full-subcategory of $\CPO$ (Corollary \ref{cor:Approx_fullsubcat_cpo}), it is sufficient to show that the terminal object of $\CPO$ is an object of $\greeks$ and that for all $\mathcal{A}, \mathcal{B}\in\greeks$, the product $ \mathcal{A}\times\mathcal{B}$ and the exponential $\mathcal{B}^\mathcal{A}$, computed in the category $\CPO$, are also objects of $\greeks$.
				
				Let $\mathcal{A}, \mathcal{B}\in\greeks$.
				Let $(L_\mathcal{A}, U_\mathcal{A}, \leq_\mathcal{A})$ and $(L_\mathcal{B}, U_\mathcal{B}, \leq_\mathcal{B})$ be the approximation tuples of the approximation spaces $\mathcal{A}, \mathcal{B}\in \greeks$. 
				We denote the orders on $\mathcal{A}$ and $\mathcal{B}$ as $\leq_{p,\mathcal{A}}$ and $\leq_{p,\mathcal{B} }$, respectively. 
				\begin{itemize}
					\item \emph{Terminal object.} Let $\mathcal{T}=\langle \{\ast\}, \leq \rangle$ be the cpo with one element, \ie the terminal object of $\CPO$. Clearly, $(\{\ast\}, \{\ast\}, \leq)$ is an approximation tuple, and $\mathcal{T}\cong\langle \{\ast\}\otimes\{\ast\}, \leq\rangle$ in $\CPO$. Hence, $\mathcal{T}\in\Obj{\greeks}$.
					\item \emph{Product.} This follows from Proposition \ref{prop:full_subcat_poset_genproducts} and Corollary \ref{cor:Approx_fullsubcat_cpo}.  

					\item \emph{Exponential.} We have to show that $\mathcal{B}^\mathcal{A}\in\CPO$ is isomorphic (in $\CPO$) to some $\mathcal{C}=L_\mathcal{C}\otimes U_\mathcal{C}\in \greeks$. Let $L_\mathcal{C}:=\hom_\CPO(\mathcal{A}, \langle L_\mathcal{B},\leq_\mathcal{B}\rangle)$, $U_\mathcal{C}:=\hom_\CPO(\mathcal{A}, \langle U_\mathcal{B},\geq_\mathcal{B}\rangle)$, and $\leq_\mathcal{C}$ be the restriction onto $L_\mathcal{C}\cup\ U_\mathcal{C}$ of the pointwise extension of $\leq_\mathcal{B}$, namely for all $f,g\in L_\mathcal{C}\cup U_\mathcal{C}$,
					\begin{align*}
						f\leq_\mathcal{C} g \iff \forall x\in \mathcal{A}, f(x)\leq_\mathcal{B} g(x)
					\end{align*}
					We first show that $L_\mathcal{C}\otimes U_\mathcal{C}$ with the precision order $\leq_{p,\mathcal{C}}$ induced by  $\leq_\mathcal{C}$ is an object of $\greeks$. In other words, we show that $(L_\mathcal{C}, U_\mathcal{C}, \leq_\mathcal{C})$ is an approximation tuple.
					\begin{enumerate}
						\item The morphisms $\bot_\mathcal{C}\colon x\mapsto \bot_\mathcal{B}$ and $\top_\mathcal{C}\colon x\mapsto \top_\mathcal{B}$ are the bottom and top element of $L_\mathcal{C}\cup U_\mathcal{C}$, respectively.
						\item Clearly, $\bot_\mathcal{C},\top_\mathcal{C}\in L_\mathcal{C}\cap U_\mathcal{C}$.
						\item Since $\langle L_\mathcal{B}, \leq_\mathcal{B} \rangle$ and $\langle U_\mathcal{B}, \leq_\mathcal{B} \rangle$ are complete lattices by definition of approximation space, and $\leq_\mathcal{C}$ is the pointwise extension of $\leq_\mathcal{B}$, it is straightforward to see that $\langle L_\mathcal{C}, \leq_\mathcal{C} \rangle$ and $\langle U_\mathcal{C}, \leq_\mathcal{C} \rangle$ are also complete lattices. 
						\item \label{item:ILP_approx_exp_obj} Let $g\in U_\mathcal{C}$, and let $S\subseteq L_\mathcal{C}$ such that  for all $f\in S$, $f\leq_\mathcal{C} g$, \ie for all $x\in \mathcal{A}$ we have $f(x)\leq_\mathcal{B} g(x)$. We have to show that $\lub_{L_\mathcal{C}} S\leq_\mathcal{C} g$. Since $g\in U_\mathcal{C}$ and $f \in L_\mathcal{C}$, for all $x\in\mathcal{A}$ we have $g(x) \in U_\mathcal{B}$ and $S_x:=\{f(x)\mid f\in S\}\subseteq L_\mathcal{B}$. By using the ILP on $\mathcal{B}$, we get that $\lub_{L_\mathcal{B}} S_x \leq_\mathcal{B} g(x)$, for all $x\in \mathcal{A}$. It is not difficult to see that $\lub_{L_\mathcal{C}} S(x)=\lub_{L_\mathcal{B}} S_x$, for all $x\in\mathcal{A}$. Hence, $\lub_{L_\mathcal{C}} S\leq_\mathcal{C} g$, as desired.
						\item Analogous to Item \ref{item:ILP_approx_exp_obj}.
					\end{enumerate}
					Hence, $\mathcal{C}\in\greeks$. It remains to show that $\mathcal{C}$ is isomorphic to $\mathcal{B}^\mathcal{A}$ in $\CPO$. Notice that there is an obvious isomorphism of sets
					\begin{equation*}
						\begin{split}
							\mu\colon  \hom_\CPO(\mathcal{A}, L_\mathcal{B}\times U_\mathcal{B})& \to L_\mathcal{C}\times U_\mathcal{C} \\
							f & \mapsto (f_1,f_2),
						\end{split}
					\end{equation*}
					where $f_1,f_2$ are the two components of $f$. By the definitions of the orders (notice the inversion of the order $\leq_\mathcal{B}$ on $U_\mathcal{B}$ in $U_\mathcal{C}$), it is easy to check that $\mu$ and $\mu^{-1}$ are both well-defined, \ie they send a monotone function to a pair of monotone functions, and a pair of monotone functions to a monotone function, respectively. Now, let $f\in \hom_\CPO(\mathcal{A}, L_\mathcal{B}\times U_\mathcal{B})$. Then
					\begin{equation*}
						\begin{split}
							f\in \hom_\CPO(\mathcal{A}, \mathcal{B}) & \iff \forall x\in \mathcal{A}, f(x)\in\mathcal{B}\\
							& \iff \forall x\in \mathcal{A}, \mu_1(f)(x)=f_1(x)\leq_\mathcal{B} f_2(x)=\mu_2(f)(x)\\
							& \iff \mu_1(f)\leq_\mathcal{C} \mu_2(f)\\
							& \iff \mu(f) \in L_\mathcal{C}\otimes U_\mathcal{C}.
						\end{split}
					\end{equation*}
					Analogously, if $(g,h)\in L_\mathcal{C}\otimes U_\mathcal{C}$, then $\mu^{-1}(f,g)\in \hom_\CPO(\mathcal{A}, \mathcal{B})$. Hence, by restricting domain and codomain of $\mu$, we get another isomorphism of sets $\nu\colon \hom_\CPO(\mathcal{A}, \mathcal{B}) \to L_\mathcal{C}\otimes U_\mathcal{C}$. It remains to show that $\nu$ and $\nu^{-1}$ are monotone. Let $f,g \in \hom_\CPO(\mathcal{A}, \mathcal{B}) $ such that $f\leq_{\mathcal{B}^\mathcal{A}} g$, \ie for all $x\in \mathcal{A}$, $f(x)\leq_{p,\mathcal{B}} g(x)$. By the defintion of the precision order, this means that $\nu_1(f)(x) \leq_\mathcal{B} \nu_1(g)(x) \leq_\mathcal{B} \nu_2(g)(x) \leq_\mathcal{B} \nu_2(f)(x)$ for all $x\in\mathcal{A}$. Hence, $\nu_1(f)\leq_\mathcal{C} \nu_1(g)$ and $\nu_2(g)\leq_\mathcal{C} \nu_2(f)$. It follows by definition that $\nu(f)\leq_{p,\mathcal{C}} \nu(g)$, as desired. The analogous result holds for $\nu^{-1}$ and can be shown similarly. Therefore, the corresponding morphism $\nu'\colon \mathcal{B}^\mathcal{A}\to \mathcal{C}$ between cpo's is an isomorphism. By Corollary \ref{cor:Approx_fullsubcat_cpo}, $\mathcal{B}^\mathcal{A}\in \greeks$, as desired.\qed
				\end{itemize}
			\end{proofwithoutqed}
			
			
			\subsection{An Approximation System}\label{sec:sub:approxsystem_greeks}
					Thanks to Theorem \ref{thm:LU_approx_category}, we can take $\greeks$ as the approximation category for any application in which we wish to apply the version of AFT of \citet{CRS18ApproximationFixpointTheoryWell-FoundedSemanticsHigher-Order}. Depending on the specific language and semantics at hand, demanded by the application, we would define a different approximation system with $\greeks$. In this subsection, we present the approximation system for the language $\HOL$ and the semantics used in \citep{CRS18ApproximationFixpointTheoryWell-FoundedSemanticsHigher-Order} to tackle higher-order logic programs. 
					
					The language $\HOL$ is based on a type hierarchy $\mathbb{H}$ with base types $o$, the boolean type, and $\iota$, the type of individuals. The composite types are  morphism types obtained from $o$ and $\iota$. In particular, the types are divided into \emph{functional types} $\sigma:= \iota\mid\iota\to \sigma$, \emph{predicate types} $\pi:= o\mid \rho\to\pi$, and \emph{parameter types} $\rho:=\iota \mid \pi$. The semantics of the base types are defined as usual: $E_o:=\{\ltrue,\lfalse\}$ with the truth order $\lfalse\leq_t \ltrue$, and $E_\iota=D$ with the trivial order ($d_1\leq d_2$ iff $d_1=d_2$), where $D$ is some fixed domain for individuals. The semantics for composite types are defined following the Cartesian closed structure of $\poset$. For instance, the semantics of type $\bool \to \bool$ is simply the poset of functions from  $E_\bool$ to itself, \ie $E_{\bool\to\bool}:=(E_\bool \to E_\bool)$.
			
			Since the ultimate goal of this application is studying the well-founded semantics of higher order logic programs via AFT, we are interested in the approximation space of Herbrand interpretations. Since Herbrand interpretations fix the value assigned to symbols of functional types, we only need the approximation spaces for the semantics $E_\pi$, for all predicate types $\pi$. In other terms, we can focus on the smallest subset $S$ of $\Obj{\poset}$ containing $E_\pi$ for  all $\pi$, and closed under generalized product.
			
			Now the definition of a suitable approximation system for $S$ is very straightforward: we just have to define the approximation space $\mathit{App}(E_o)$, the set of exact elements $\mathcal{E}_o$, and the projection $\mathfrak{p}_o$. All the other elements are defined inductively following the Cartesian closed structure of  $\greeks$. We define: $\mathit{App}(E_o):=E_o\otimes E_o=\langle \{(\ltrue,\ltrue), (\lfalse, \ltrue), (\lfalse,\lfalse)\}, \leqp\rangle$;
			$\mathcal{E}_o=\{(\ltrue,\ltrue), (\lfalse, \lfalse)\}$;
			 and $\mathfrak{p}_o(\ltrue, \ltrue):=\ltrue$ and $\mathfrak{p}_o(\lfalse, \lfalse):=\lfalse$.
			Finally, given a vocabulary $V$ for $\HOL$ containing symbols of predicate type, and a program $\mathsf{P}$ over $V$, the approximation space of Herbrand interpretations of $\mathsf{P}$ is $\mathcal{H}_\mathsf{P}:=\mathit{App}\left( \Pi_{s\in V} E_{t(s)} \right)= \Pi_{s\in V} \mathit{App}(E_{t(s)})\in\Obj{\Approx}$, where $t(s)$ is the type of the symbol $s$. 
			
			This greatly simplifies the construction of \cite{CRS18ApproximationFixpointTheoryWell-FoundedSemanticsHigher-Order}. In particular,
			notice that the pairs of \emph{monotonte-antimonotone} and \emph{antimonotone-monotone} functions they defined are \emph{precisely} the elements of the exponential objects of $\greeks$. Moreover, by changing the base types and their semantics, this approximation system can be readily adapted to suit other applications.
			
			In conclusion, it is important to stress that we now have a clear concept of exactness: for the base type $\bool$ the exact elements are $\mathcal{E}_o=\{(\ltrue,\ltrue), (\lfalse, \lfalse)\}$, and for higher-order types, we follow Definition \ref{def:exact}. The work of 
			\citet{CRS18ApproximationFixpointTheoryWell-FoundedSemanticsHigher-Order} lacked a notion of exactness, making it impossible to determine whether a model is actually two-valued; they discussed this question in their future work section.  
			Let us illustrate on their example accompanying the discussion. 
			%

			\begin{example}\label{ex:exact_model_iclp18}
				Let $P$ be a program with the single rule $p(R)\lrule R$, where $p$ is a predicate of type $o\to o$ and $R$ is a variable of type $o$. The space of interpretations for $p$ is simply $\mathit{App}(E_{o\to o})=\mathit{App}(E_o)^{\mathit{App}(E_o)}$, \ie all the monotone functions from $\mathit{App}(E_o)=\{(\lfalse,\ltrue), (\lfalse,\lfalse), (\ltrue,\ltrue)\}$ to itself, as defined above. 
				By the semantics of \cite{CRS18ApproximationFixpointTheoryWell-FoundedSemanticsHigher-Order}, the meaning of this program is given by the interpretation $(I,J)$, where $I(p)(\ltrue,\ltrue)=J(p)(\ltrue,\ltrue)=J(p)(\lfalse,\ltrue)=\ltrue$, and $I(p)(\lfalse,\lfalse)=J(p)(\lfalse,\lfalse)=I(p)(\lfalse,\ltrue)=\lfalse$. Since $I\neq J$, $(I,J)$ is not exact according to the classical definition of AFT  \citep{DMT00ApproximationsStableOperatorsWell-FoundedFixpointsApplications}, even though we would expect to find a 2-valued model, \ie the one assigning to $p$ the identity function over $\{\lfalse,\ltrue\}$. Nevertheless, according to our definition, $(I,J)$ is indeed exact: it sends exacts of $E_o$ to exacts of $E_o$. Furthermore, by the approximation system we defined in this section, it is easy to see that $(I,J)$ represents $\mathfrak{p}_{o\to o}(I,J)=\mathcal{I}\in E_{o\to o}=(E_o\to E_o)$, where $\mathcal{I}(\ltrue)=\ltrue$ and $\mathcal{I}(\lfalse)=\lfalse$, as desired.
			\end{example}

			\subsection{A New Approximator}\label{sec:sub:approximator_greeks}
			
				As presented at the end of Section \ref{sec:AFT}, the approximator of \cite{CRS18ApproximationFixpointTheoryWell-FoundedSemanticsHigher-Order} does not provide the expected well-founded semantics for logic programs when there is an existential quantifier in the body of a rule.  
			In this subsection, we propose a new approximator that solves such issue. We achieve this by restricting the set of elements over which certain variables can range. In particular, variables that are arguments of a predicate being defined, \ie in the head of a rule, can range over all the elements of the approximation spaces of the corresponding types: we want to define the approximation of a higher order predicate also when the argument is an approximation. On the contrary, variables that appear exclusively in the body of a rule do not need to be approximated. In other words, a variable of type $\famtypes$ that is not argument of any predicate being defined, will be forced to range only over the set $ \exact{\famtypes}$ of exact elements.
			
			Before stating the new definition for the approximator, we briefly recall the full syntax of $\HOL$. We slightly modify the one presented in  \citep{CRS18ApproximationFixpointTheoryWell-FoundedSemanticsHigher-Order} to make it less heavy.
			
			The \emph{alphabet} of $\HOL$ consists of the following: predicate variables/constants of every predicate type $\pi$; individual variables/constants of type $\iota$; the equality constant $\approx$ of type $\iota\to\iota\to\bool$ for comparing individuals of type $\iota$; the conjunction constant $\wedge$ of type $\bool\to\bool\to\bool$; the rule operator constant $\leftarrow$ of type $\bool\to\bool\to\bool$; and the negation constant $\sim$ of type $\bool\to\bool$.
			
				Every predicate variable/constant and every individual variable/constant is a
				\emph{term}; if $\mathsf{E}_1$ is a term of type $\rho \to \pi$ and $\mathsf{E}_2$ a
				term of type $\rho$ then $(\mathsf{E}_1\ \mathsf{E}_2)$ is a term of type $\pi$.
				Every term is also an \emph{expression};
				if $\mathsf{E}$ is a term of type $\bool$ then $(\pnot \mathsf{E})$ is an expression of type $\bool$;
				if $\mathsf{E}_1$ and $\mathsf{E}_2$ are terms of type $\basedom$, then $(\mathsf{E}_1\approx \mathsf{E}_2)$ is an expression of type $\bool$.


				A \emph{rule} of $\HOL$ is a formula
				$\mathsf{p}\ \mathsf{R}_1 \cdots \mathsf{R}_n \lrule \mathsf{E}_1 \land \ldots \land \mathsf{E}_m$,
				where $\mathsf{p}$ is a predicate constant of type $\rho_1 \to \cdots \to \rho_n \to \bool$,
				$\mathsf{R}_1,\ldots,\mathsf{R}_n$ are distinct variables of types $\rho_1,\ldots,\rho_n$ respectively and
				the $\mathsf{E}_i$ are expressions of type $\bool$.
				The term $\mathsf{p}\ \mathsf{R}_1 \cdots \mathsf{R}_n$ is the \emph{head} of the rule and
				$ \mathsf{E}_1 \land \ldots \land \mathsf{E}_m$ is the \emph{body} of the rule. For the sake of simplicity, we often write $ \mathsf{E}_1, \ldots, \mathsf{E}_m$, in place of $ \mathsf{E}_1 \land \ldots \land \mathsf{E}_m$.
				A \emph{program} $\mathsf{P}$ of $\HOL$ is a finite set of rules. A \emph{state} $s$ of a program $\mathsf{P}$ is a function that assigns to each variable $\mathsf{R}$ of type $\rho$, an element of $\mathit{App}(E_\rho)$, if $\rho\neq \iota$, or an element of $E_\iota=D$, if $\rho=\iota$. We denote by $s[\mathsf{R}_1/d_1,\ldots ,  \mathsf{R}_n/d_n]$ a state that assigns to each $\mathit{R}_i$ the corresponding value $d_i$, and coincides with $s$ on the other variables.

			
		Finally, we provide the definition for the three-valued semantics of \cite{CRS18ApproximationFixpointTheoryWell-FoundedSemanticsHigher-Order} adapted to the above, slightly-modified definitions.

			\begin{definition}
				Let $\mathsf{P}$ be a program, ${\cal I}$ an interpretation of $\mathsf{P}$, and
				$s$ a state. The \emph{three-valued semantics} of expressions and bodies
				is defined as follows:
				\begin{enumerate}
					\item $\oldmwrst{\mathsf{c}}{\cal I}{s} = \mathsf{c}$, for every individual constant $\mathsf{c}$,
					\item $\oldmwrst{\mathsf{p}}{{\cal I}}{s} = {\cal I}(\mathsf{p})$, for every predicate constant $\mathsf{p}$,
					\item $\oldmwrst{\mathsf{R}}{{\cal I}}{s} = s(\mathsf{R})$, for every  variable $\mathsf{R}$,
					\item $\oldmwrst{(\mathsf{E}_1\ \mathsf{E}_2)}{{\cal I}}{s} = $ $\oldmwrst{\mathsf{E}_1}{\cal I}{s}$ $(\oldmwrst{\mathsf{E}_2}{\cal I}{s})$,
					\item $\oldmwrst{(\mathsf{E}_1 \bigwedge \mathsf{E}_2)}{\cal I}{s}=$ $ \glb_{\leq_t}\{\oldmwrst{\mathsf{E}_1}{\cal I}{s},$ $ \oldmwrst{\mathsf{E}_2}{\cal I}{s}\}$,
					\item $\oldmwrst{(\sim \mathsf{E})}{{\cal I}}{s} =  (\oldmwrst{\mathsf{E}}{{\cal I}}{s})^{-1}$, with $(\mathbf{t},\mathbf{t})^{-1}\!=\!(\mathbf{f},\mathbf{f})$, $(\mathbf{f},\mathbf{f})^{-1}\!=\!(\mathbf{t},\mathbf{t})$
					and $(\mathbf{f},\mathbf{t})^{-1}\!=\!(\mathbf{f},\mathbf{t})$,
					\item $\oldmwrst{(\mathsf{E}_1\approx \mathsf{E}_2)}{{\cal I}}{s} = \begin{cases}
						(\mathbf{t},\mathbf{t}),  & \text{if } \oldmwrst{\mathsf{E}_1}{{\cal I}}{s} = \oldmwrst{\mathsf{E}_2}{{\cal I}}{s} \\
						(\mathbf{f},\mathbf{f}), & \text{otherwise}
					\end{cases}$.
				\end{enumerate}
				where $\leq_t$ is the \emph{truth order} defined by $\mathbf{f}\leq_t \mathbf{u}\leq_t \mathbf{t}$.
			\end{definition}
		
%
		As already explained, we want to restrict a variable appearing only in the body of a rule to range over the elements of $\exact{\famtypes}$, with $\famtypes$ being the type of the variable. We call a state $s$ \emph{exact} if for all variables $\mathsf{R}$ of type $\famtypes\neq\iota$, it holds that  $s(\mathsf{R})\in\exact{\famtypes}$.
We denote by			$\mathcal{S}$ the set of exact states.

		We have now all the elements to introduce the new approximator, i.e.\ the \emph{three-valued immediate consequence operator}. For the sake of simplicity, in the following we define $\mathit{App}(E_\iota):=E_\iota=D$, even though $E_\iota\notin S$ and has no associated approximation space.
		
			
			\begin{definition}\label{def:approximator}
				Let $\mathsf{P}$ be a program. The \emph{three-valued immediate consequence operator}
				${\Psi}_\mathsf{P} : {\cal H}_\mathsf{P} \to {\cal H}_\mathsf{P}$ is defined 
				for every predicate constant $\mathsf{p} : \rho_1 \to \cdots \to \rho_n \to \bool$ in $\mathsf{P}$, and for
				all $d_1 \in \mathit{App}(E_{\rho_1}),\ldots, d_n \in \mathit{App}(E_{\rho_n})$, as:
				${\Psi}_{\mathsf{P}}({\cal I})(\mathsf{p})\ d_1 \cdots d_n =
				\lub_{\leq_t}\{
				\oldmwrst{\mathsf{E}}{{\cal I}}{s[\mathsf{R}_1/d_1,\ldots ,  \mathsf{R}_n/d_n]} \mid
				\mbox{$s\in \mathcal{S}$ and
					$(\mathsf{p}\ \mathsf{R}_1 \cdots \mathsf{R}_n \lrule \mathsf{E})$ in $\mathsf{P}$}\}$.
			\end{definition}
		
		With Definition \ref{def:approximator}, we solve the issue linked to existential quantifiers. Let us  review the example presented in Subsection \ref{sec:AFT}. We considered a program $\mathsf{P}$ with just one rule $\mathsf{p} \lrule \mathsf{R}\wedge \sim \mathsf{R}$, where $\mathsf{p}$ is a predicate constant of type $\bool$, and $\mathit{R}$ is a variable of type $\bool$. Observe that in this case the space of Herbrand interpretations is just $\mathcal{H}_\mathsf{P}:= \mathit{App}(E_{\bool})\in\Obj{\Approx}$, as we are only interested in the interpretation of the predicate $\mathsf{p}$. For all interpretations $\mathcal{I}\in \mathcal{H}_\mathit{P}$, we have
		
		\begin{equation}\label{eq:approximator}
			\begin{split}
				{\Psi}_{\mathsf{P}}({\cal I})(\mathsf{p}) &=
				\lub_{\leq_t}\{
				\oldmwrst{\mathsf{E}}{{\cal I}}{s} \mid
				s\in \mathcal{S} \text{ and }
					(\mathsf{p} \lrule \mathsf{E}) \text{ in } \mathsf{P}\}=
					\\ &= \lub_{\leq_t}\{
					\oldmwrst{\mathsf{R}\wedge \sim \mathsf{R}}{{\cal I}}{s} \mid
					s\in \mathcal{S} \}= \lub_{\leq_t}\{
					s(\mathsf{R})\wedge s(\mathsf{R})^{-1} \mid
					s\in \mathcal{S} \}=
						\\ &= \lub_{\leq_t}\{ (\mathbf{f},\mathbf{f})\wedge (\mathbf{t},\mathbf{t}), (\mathbf{t},\mathbf{t})\wedge (\mathbf{f},\mathbf{f})\}= (\mathbf{f},\mathbf{f}).
			\end{split}
		\end{equation}
		
		Notice that for this specific program $\mathsf{P}$, both the approximator $	{\Psi}_{\mathsf{P}}$ and the old version from \citep{CRS18ApproximationFixpointTheoryWell-FoundedSemanticsHigher-Order} do not depend on the interpretation $\mathcal{I}$, but only on the   states. While the approximator of \cite{CRS18ApproximationFixpointTheoryWell-FoundedSemanticsHigher-Order} considered any possible state, even the one sending $\mathsf{R}$ to $ (\mathbf{f},\mathbf{t})$, ${\Psi}_{\mathsf{P}}$ takes into account only exact states, i.e.\ $\mathsf{R}$ can only be sent to an element of $\exact{\bool}=\{(\mathbf{f},\mathbf{f}),(\mathbf{t},\mathbf{t})\}$. This limitation removes the formula $(\mathbf{f},\mathbf{t})\wedge (\mathbf{f},\mathbf{t})=(\mathbf{f},\mathbf{t})$ from the least upper bound computation in \eqref{eq:approximator}, which was the one causing the evaluation of $\mathsf{p}$ being $(\mathbf{f},\mathbf{t})$ for the approximator of \cite{CRS18ApproximationFixpointTheoryWell-FoundedSemanticsHigher-Order}. 
		
		Since the approximator $	{\Psi}_{\mathsf{P}}$  does not depend on the interpretation, it is immediate to see that the \emph{well-founded} operator $\mathcal{S}_{{\Psi}_{\mathsf{P}}}$ coincides with the approximator ${\Psi}_{\mathsf{P}}$ for all $(I,J)\in \mathcal{H}_\mathsf{P}$:
		
		\begin{equation*}
			\mathcal{S}_{{\Psi}_{\mathsf{P}}}(I,J)=
			\big(\lfp({\Psi}_{\mathsf{P}}(\cdot, J)_1), \lfp({\Psi}_{\mathsf{P}}(I,\cdot)_2)\big)
			=\big({\Psi}_{\mathsf{P}}(\cdot, J)_1,
			 {\Psi}_{\mathsf{P}}(I,\cdot)_2\big)
			={\Psi}_{\mathsf{P}}(I,J).
		\end{equation*}
		
		It follows that $\mathcal{S}_{{\Psi}_{\mathsf{P}}}$  does not depend on the interpretation either. Thus, the least fixpoint of $\mathcal{S}_{{\Psi}_{\mathsf{P}}}$, which corresponds to the well-founded model of $\mathsf{P}$, is just the interpretation sending $\mathsf{p}$ to $(\mathbf{f},\mathbf{f})$, resulting in a sensible account for the well-founded sementics. Moreover, observe that this interpretation is also exact by our new definition, and it corresponds to the unique exact stable model of the program $\mathsf{P}$.


%

	In the remainder of this section, we present two examples that highlight the importance of enabling the application of approximate objects to approximate objects.

		\begin{example}\label{ex:graph}
			Consider an undirected graph given by a predicate $\verb|node|\colon \iota\to\bool$, containing all the nodes of the graph, and a predicate $\verb|edge|\colon \iota\to\iota\to\bool$ defining the edge relation, which we assume to be symmetric. Some nodes of the graph are \emph{marked}. We call a set of nodes $S$ a \emph{covering} if for every marked node $n$ there exists a node in $S$ with an edge to $n$. Now, suppose that a Player can modify the set of marked nodes by swapping a marked node with a neighbouring, unmarked node. The goal of the Player is reached when the set of marked nodes is a covering. At that point, the game is over and the Player cannot swap nodes anymore.

			The key predicates of our example are contained in Listing \ref{listing:graphgame}. We have defined them  in terms of a time parameter $\verb|T|$ of type $\iota$, assuming that the Player can only do one swap at a time. In particular,  here are the signatures and meanings of the main predicates of Listing \ref{listing:graphgame}: $\verb|swap|\colon \iota \to\iota\to\iota\to\bool$ indicates whether at a certain time, two nodes are swapped; $\verb|marked|\colon \iota\to\iota\to\bool$ represents the set of marked nodes at a specific time (Lines \ref{line:marked1} to \ref{line:marked2}); $\verb|covering|\colon \iota\to(\iota\to\bool)\to\bool$ tells whether at a certain time a set of nodes is a covering (Lines \ref{line:covering1}, and \ref{line:covering2}); and $\verb|gameOver|\colon \iota$ expresses whether the game is over at a certain time (Line \ref{line:gameOver}).
			
				\begin{lstlisting}[caption=Graph Game.,label=listing:graphgame]
				% We define predicates to add/remove nodes to/from marked set.			
				add T X  :- swap T X Y 
				remove T Y :- swap T X Y
				
				% We define the set of marked nodes based on the marked nodes at the previous time point and the last swap.
				marked T X :- succ T' T, ~(gameOver T), marked T' X, |\\\newlineInListing|~(remove T' X). |\label{line:marked1}|
				marked T X :- succ T' T, ~(gameOver T), add T' X.
				marked T X :- succ T' T, gameOver T', marked T' X. |\label{line:marked2}|
				
				% We define what a covering of a set of nodes is.
				nextTo S X :- S Y, edge Y X.
				nonsubset S Q :- S X, ~(Q X).
				subset S Q :-  ~(nonsubset S Q).
				ncovering T S :- marked T X, ~(nextTo S X). |\label{line:covering1}|
				covering T S :- subset S node, ~(ncovering T S). |\label{line:covering2}|
				
				% We define when the game ends.
				gameOver T :- covering T (marked T).  |\label{line:gameOver}|
			\end{lstlisting}

				In the formalization of \citet{CRS18ApproximationFixpointTheoryWell-FoundedSemanticsHigher-Order}, natural numbers are not taken into account. Thus, we regard the time variable $\verb|T|$ as an individual variable of type $\iota$ and we limit our example to only three time points, expressed by the individual constants $\verb|a|, \verb|b|$, and $\verb|c|$, related by the successor relation $\verb|succ|\colon \iota\to\iota\to\bool$, as expressed in Listing \ref{listing:timepoints}. In the same Listing, we also instantiate the nodes and edges of the graph that we chose for this example, and the initial set of marked nodes, i.e. $\verb|marked a|$.

				\begin{lstlisting}[caption={Instantiation of time points,  graph's nodes and edges, and initial set of marked nodes.},label=listing:timepoints]
				% Time points a, b, and c: b successor of a, and c of b.
				time a.  |\label{line:timea}|
				time b.  
				time c.  
				succ b a. 
				succ c b.  |\label{line:successorcb}|
				% Nodes.
				node x. 
				node y. 
				node z. 
				node u. 
				node v. 
				% Edges.
				edge x y.
				edge x z.
				edge x u.
				edge z v.
				edge u v.
				% Marked nodes at the start of the game (time point a).
				marked a y.
				marked a u.
					\end{lstlisting}
				
		We are only missing the swaps the Player makes at each of the three time points. For example, we could have those listed in Listing \ref{listing:swaps}. 
		
			\begin{lstlisting}[caption=Swaps.,label=listing:swaps]
			% Nodes x, y, z, u, and v, and swaps at the time points a, b, and c.
			swap a v u.
			swap b x y.  
			swap c z x. 
		\end{lstlisting}
	
	By joining Listings \ref{listing:graphgame}, \ref{listing:timepoints}, and \ref{listing:swaps}, we obtain a program $\mathsf{P}$ encoding a specific run of the game.
		
		Using the machinery of AFT, we can easily find the well-founded, the stable, the Kripke-Kleene, and the supported models  of  $\mathsf{P}$.
		To obtain the well-founded operator, we compute the least fixpoint of the well-founded operator of the approximator contained in Definition \ref{def:approximator}, \ie the least fixpoint of $\subscript{\mathcal{S}}{{\Psi}_{\mathsf{P}}}\colon (x,y)\mapsto (\subscript{S}{{\Psi}_{\mathsf{P}}}(y), \subscript{S}{{\Psi}_{\mathsf{P}}}(x))$, where $\subscript{S}{{\Psi}_{\mathsf{P}}}\colon x\mapsto \lfp({{\Psi}_{\mathsf{P}}}_1(\cdot , x))$ is the stable operator\footnote{The well-founded, and the stable operator have been briefly introduced in Section \ref{sec:AFT} of the Preliminaries.}. Since the well-founded operator is monotone, to find its least fixpoint it is sufficient to repeatedely apply the operator starting from the bottom element of its domain, namely the interpretation sending every predicate constant to the bottom element of the respective approximation space. 
		Notice that during the first iterative applications of the well-founded operator, the predicates $\verb|marked|$, $\verb|covering|$, and $\verb|gameOver|$ are being defined only for the first time points, \ie they are partially defined. In other words, the three-valued interpretations that we obtain from the first computations leading to the well-founded fixpoint, send the aforementioned predicates to approximate objects of the respective approximation spaces. Since $\verb|marked|$, $\verb|covering|$, and $\verb|gameOver|$ are all defined by mutual induction,  we are forced to apply an approximate object on another approximate object. In particular, in Line \ref{line:gameOver} of Listing \ref{listing:graphgame}, for the definition of $\verb|gameOver|$, the predicate $\verb|covering|$ is applied on $\verb|marked|$. Only when the fixpoint is reached, all the predicates being defined will be exact, i.e.\ two-valued.

		In Listing \ref{listing:swaps}, we have provided a specific set of swaps the Player makes. We can obtain a more general setup by using choice rules to define the predicate $\verb|swap|$, as we do in Listing \ref{listing:choicerules}.

			\begin{lstlisting}[caption=Choice Rules.,label=listing:choicerules]
			% Choice rules: the user can swap one edge with a neighbour at each time. 
			swap T X Y :- node X, node Y, time T, ~(nswap T X Y).  |\label{line:swap1}|
			nswap T X Y :- node X, node Y, time T, ~(swap T X Y). |\label{line:swap3}|
			% First element must be in the marked set, second cannot 
			:- swap T X Y, ~(marked T X). |\label{line:swap4}|
			:- swap T X Y, marked T Y.
			% At most one swap at a time.
			:- swap T X Y, swap T X' Y', ~(X'=X).
			:- swap T X Y, swap T X' Y', ~(Y'=Y).
			% Can only swap neighbors.
			:- swap T X Y, ~(edge X Y). |\label{line:swap2}|
		\end{lstlisting}
		
		By joining Listings \ref{listing:graphgame}, \ref{listing:timepoints}, and \ref{listing:choicerules}, we obtain another program $\mathsf{P}'$, and we can again compute the models of interest via AFT. 
		In particular, now each stable model corresponds to a possible run of the game with starting set of  marked nodes $\verb|marked a|$.
		Notice that, because of the choice rules in Listing \ref{listing:choicerules}, the well-founded model of $\mathsf{P}'$ leaves most of the predicates undefined.
		
		\end{example}

		\begin{example}\label{ex:manifacturer}
			Let us consider a manufacturing company that aims at growing and diversifying its production. We represent raw materials with individual constants of type $\iota$, and finished products with predicate constants of type $\iota\to \bool$, such that if $\verb|P|$ is any finished product, and $\verb|M|$ is any raw material, then $\verb|P M|$ is true if and only if $\verb|M|$ is necessary to craft $\verb|P|$. We denote by $\verb|materials|\colon \iota\to \bool$ the predicate corresponding to the set of all raw materials, and by $\verb|products|\colon (\iota\to\bool)\to\bool$ the predicate corresponding to the set of all finished products.
			
			
			We want to define a predicate $\verb|production|\colon \iota\to \iota \to \bool$ (Lines \ref{line:production1} and \ref{line:production2} of Listing \ref{listing:company}) that indicates which raw materials the company has to acquire for production at a certain time: $\verb|production T M|$ is true if and only if the company acquires the material $\verb|M|$ at time $\verb|T|$. As in Example \ref{ex:graph}, we regard the time variable $\verb|T|$ as an individual variable of type $\iota$ and we limit our program in Listing \ref{listing:company} to only three time points, related by the successor relation (Lines \ref{line:time1} to \ref{line:successor2}).
			 We fix the initial set $\verb|production a|\colon \iota\to \bool$ of materials the company starts with. At each time point, the company decides which new materials to acquire: we encode the information about these potential new ingredients with the predicate $\verb|candidates|\colon \iota\to(\iota\to\bool)\to\bool$ (Lines \ref{line:candidates1} and \ref{line:candidates2}), which takes as argument a time point, \ie an individual variable, and a set of materials, \ie a predicate of type $\iota\to\bool$. The selection of new materials the company takes into consideration obeys a few criteria:
			 \begin{enumerate}
			 	\item \textsl{Maximize profit}: products necessitating more raw materials to be crafted require more expertise and more capital to invest, but they provide more profit. Hence, as time progresses, the company aims at products more and more complex: at time $\verb|T|$, a set $\verb|P|$ of raw materials is a candidate if it can be covered by sets corresponding to some finished products of complexity $\verb|T|$ (Lines \ref{line:subclique}, \ref{line:notcliquecovered}, and \ref{line:candidates1}). We assume a constant predicate $\verb|complexity|\colon (\iota\to \bool)\to \iota\to \bool$ indicating the complexity of a product is given. 
			 	\item \textsl{Cautiousness}: producing items using only new materials may be risky and time consuming,  as the manifacturing team has to acquire novel knowledge, and new suppliers for the raw materials need to be found.  Hence, the newly accepted products are required to share at least one raw material with a product in production at the prevous time point (Lines \ref{line:intersection}, \ref{line:subclique}, \ref{line:notcliquecovered}, and \ref{line:candidates1}).
			 	\item \textsl{Efficient growth}: as time passes and the company produces more complex items, older, simpler products can be put out of production. This is done gradually: if in $\verb|production T|$ there is still some material that is not needed to craft any product of size $\verb|T|$ or $\verb|succ T|$, then such material will not be in production at the following time point; otherwise, all materials that are not needed to craft any product of size $\verb|succ T|$ are dropped out of production at time $\verb|succ T|$ (Lines \ref{line:candidates2}). In other words, all the products with the lowest complexity are dropped.

			 \end{enumerate}
		 
		 Finally, $\verb|production|$ is just the union of all the candidates (Line \ref{line:production1}). If there are no candidates at a certain time point, the production does not vary at the next time point, and the company ends its expansion (Lines \ref{line:existcandidate}, and \ref{line:production2}).
			
				\begin{lstlisting}[caption=The growth of the manifacturing company.,label=listing:company]
				% a, b, and c are time points, b is the successor of a, and c of b.
				time a.  |\label{line:time1}|
				time b.  |\label{line:time2}|
				time c.  |\label{line:time3}|
				succ b a.  |\label{line:successor1}|
				succ c b.  |\label{line:successor2}|
				% intersect Q R is true if the intersection between Q and R is non-empty.
				intersects Q R :- Q X, R X.  |\label{line:intersection}|
				% subset Q R is true (nonsubset Q R is true) if R is (is not) a subset of Q.
				nonsubset Q R :- Q X, ~(R X).|\label{line:nonsubset}|
				subset Q R :-  ~(nonsubset Q R).
				
				% There exists a product P made of T raw materials, all belonging to S, some belonging to production T', and including M.
				existsubprod T S M :- products P, complexity P T, subset P S, |\\\newlineInListing|P M, intersects P (production T'), succ T T'.  |\label{line:subclique}|
				% S cannot be covered by products of size T if there exists a material M in S that is never part of a subproduct of S.
				notcovered T S :- S M,  ~(existsubprod T S M), time T.  |\label{line:notcliquecovered}|
				% S is a candidate at time T if it can be covered and it is a set of materials (maximize profit, and cautiousness).
				candidates T S :- ~notcovered T S, time T, subset S materials.  |\label{line:candidates1}|
				% S is a candidate if it is a product of size T', in production at time T', and if in production T' there were still raw materials only needed for products of size different than T' (efficient growth).
				candidates T S :- ~(candidates T' (production T')), products P, |\\\newlineInListing|subset P (production T'), complexity P T', succ T T'.  |\label{line:candidates2}|
				% There exists a candidate at time T.
				existcandidate T :- candidates T S.  |\label{line:existcandidate}|
				% A material is needed for production at time T if it is in one of the candidate sets at time T.
				production T M :- candidates T S, S M.  |\label{line:production1}|
				% If there are no candidates at time T, then the production remains the same.
				production T M :- ~(existcandidate T), production T' M,  |\\\newlineInListing|succ T T'.|\label{line:production2}|
			\end{lstlisting}
	
	Note that the symbols $\verb|Q|$, $\verb|R|$, and $\verb|S|$, highlighted in red, are predicate variables of type $\iota \to \bool$ that are used in a higher-order style in the rules in Lines \ref{line:intersection}, \ref{line:nonsubset}, \ref{line:notcliquecovered}, and \ref{line:production1}.
	
	Similarly to what happens for Example \ref{ex:graph}, since the predicates $\verb|candidates|$ and $\verb|production|$ are defined by mutual induction, they will be fully defined only at the end of the least fixpoint construction.
	However, in order to define $\verb|candidates|$ we need to apply it to $\verb|production|$ (Line \ref{line:candidates2}). In particular, before reaching the fixpoint, we will need to apply an approximate object, namely $\verb|candidates|$, on another approximate object, \ie $\verb|production|$. 
	
\end{example}

			\section{Conclusion}\label{sec:conclusions}
			
			We introduced a novel theoretical framework that provides a mathematical foundation for using the machinery of AFT on higher-order objects. 
			In particular, we defined \emph{approximation categories} and \emph{approximation systems}: they employ the notion of  Cartesian closedness to inductively construct a hierarchy of approximation spaces for each semantics of the types of a given (higher-order) language. This approach solves the issue of applying approximate objects onto approximate objects and ensures that the approximation spaces have the same mathematical structure at any order of the hierarchy, enabling the application of the same AFT techniques at all levels. Moreover, we defined exact elements of a higher-order approximation space, together with a projection function.
			This is a non-trivial definition and it is fundamental to obtain a sensible AFT framework, \ie a framework in which we can determine when an object, and in particular a model, is two-valued, and retrieve the elements that are being approximated.
			
			Despite seeming complicated at first, the use of CT not only provides a solid, formal mathematical foundation to work with, but also allows to reduce the complexity of proofs. 
			The inductive nature and generality of the definition of an approximation system make it extremely easy to adapt the framework to different languages, types, and semantics, as we only have to modify the base elements of the induction. Such generality enables extending different existing versions of AFT while capturing their common underlying characteristics, as we have shown for standard AFT and
			the extension of consistent AFT of \citet{CRS18ApproximationFixpointTheoryWell-FoundedSemanticsHigher-Order}. Moreover, concerning the latter version of AFT mentioned, we have resolved its problematic features. In particular, we provided a novel approximator which behaves properly, even on programs with existential quantifiers in the body of rules, and we defined the concept of exactness, previously missing in the work of \citet{CRS18ApproximationFixpointTheoryWell-FoundedSemanticsHigher-Order}, allowing to consider exact stable models.

			As far as future work and developments are concerned, it is paramount to notice that by systematically extending present (and possibly future) versions of AFT to the higher-order setting, this paper further enriches the vast body of algebraic results on AFT. In particular, this allows us to make all the already-existing formal results regarding AFT readily available in a higher-order AFT context. This includes, but it is not limited to,
			stratification results \citep{VGD06SplittingoperatorAlgebraicmodularityresultslogics,BC21StratificationApproximationFixpointTheoryApplicationActive}, grounded fixpoints \citep{BVD15Groundedfixpointstheirapplicationsknowledgerepresentation}, well-founded induction \citep{DV07Well-FoundedSemanticsAlgebraicTheoryNon-monotoneInductive}, and strong equivalence \citep{T06Stronguniformequivalencenonmonotonictheories-}. Moreover, in light of this newly established bridge to the higher-order environment, one could explore the possibilities within the application fields where AFT previously succeeded,
			such as abstract argumentation \citep{S13Approximatingoperatorssemanticsabstractdialecticalframeworks,B19WeightedAbstractDialecticalFrameworksthroughLens},  active integrity constraints \citep{BC18Fixpointsemanticsactiveintegrityconstraints}, stream reasoning \citep{A20Fixedpointsemanticsstreamreasoning},  integrity constraints for the semantic web \citep{BJ21FixpointSemanticsRecursiveSHACL}, and Datalog \citep{pollaci2025}.
			
			Lastly, it may be of interest to research how the developed higher-order semantics and language $\HOL$ presented in Section \ref{sec:iclp18} relate to Hilog and Prolog with meta-predicates \citep{CKW93HILOGFoundationHigher-OrderLogicProgramming}. In particular, $\HOL$ and Prolog show two rather different natures: while Prolog is procedural and intensional, the language $\HOL$  provides a declarative and extensional approach. This is indeed not trivial to obtain for the higher-order setting, as it was also pointed out by \cite{RS18ExtensionalSemanticsHigher-OrderLogicProgramsNegation}. 

			\section*{Competing Interests}
		The authors declare none.

\bibliographystyle{tlplike}
		
			\bibliography{bb_tex/bb_refs-to-appear,bb_tex/bb_refs,otherrefs}

				\end{document}